\newif\ifsingle
\newtheorem{thm}{Theorem}
\newtheorem{corollary}{Corollary}
\newtheorem{prop}{Proposition}
\newtheorem{lem}{Lemma}
\acrodef{bs}[BS]{base station}
\acrodef{mimo}[MIMO]{multiple-input multiple-output}
\acrodef{mac}[MAC]{multiple access channel}
\acrodef{tdd}[TDD]{time-division duplex}
\acrodef{ut}[UT]{user terminal}
\acrodef{sinr}[SINR]{signal-to-interference-and-noise ratio}
\acrodef{mmse}[MMSE]{minimum mean-squared error}
\acrodef{mse}[MSE]{mean-squared error}
\acrodef{psd}[PSD]{positive semi-defninte}
\acrodef{svd}[SVD]{signular value decomposition}
\acrodef{GRTM}[GRTM]{greedy ratio trace maximization}
\acrodef{GSRTM}[GSRTM]{greedy sum of ratio traces maximization}
\acrodef{magiq}[MaGiQ]{Minimal gap iterative quantization} 
\newcommand{\Nrf}{N_{\rm RF}}                            
\newcommand{\Nbs}{N_{\rm BS}}                            
\newcommand{\Tpilots}{\tau}                                
\newcommand{\Wrf}{\mathcal{W}_{\rm RF}}                                
\newcommand{\Sk}[1]{\boldsymbol{S}_{\left(#1\right)}}
\newcommand{\Ind}{1\leq i\leq M}                                
\title{Pilot Contamination Mitigation with Reduced RF Chains}
\author{Shahar Stein Ioushua, \emph{Student IEEE} and Yonina C. Eldar, \emph{Fellow IEEE} 
	\thanks{Submitted for review on January 16, 2018.
		This project has received funding from the European Union's Horizon 2020 research and innovation program under grant agreement No. 646804-ERC-COG-BNYQ, and from the Israel Science Foundation under Grant No. 335/14.}
	\thanks{S. S. Ioushua is with the Dept. of Electrical Engineering, Technion - Israel Institute of Technology (e-mail: shahar-stein@campus.technion.ac.il).}    
	\thanks{Y. C. Eldar is with the Dept. of Electrical Engineering, Technion - Israel Institute of Technology (e-mail: yonina@ee.technion.ac.il).}}
\begin{document}
	\maketitle
	
	\begin{abstract}
		Massive multiple-input multiple-output (MIMO) communication is a promising technology for increasing spectral efficiency in wireless networks.
		Two of the main challenges massive MIMO systems face are degraded channel estimation accuracy due to pilot contamination and increase in computational load and hardware complexity due to the massive amount of antennas. 
		In this paper, we focus on the problem of channel estimation in massive MIMO systems, while addressing these two challenges: 
		We {\em jointly} design the pilot sequences to mitigate the effect of pilot contamination and propose an analog combiner which maps the high number of sensors to a low number of RF chains, thus reducing the computational and hardware cost. 
		We consider a statistical model in which the channel covariance obeys a Kronecker structure. 
		In particular, we treat two such cases, corresponding to fully- and partially-separable correlations.
		We prove that with these models, the analog combiner design can be done independently of the pilot sequences.
		Given the resulting combiner, we derive a closed-form expression for the optimal pilot sequences in the fully-separable case and suggest a greedy sum of ratio traces maximization (GSRTM) method for designing sub-optimal pilots in the partially-separable scenario.
		We demonstrate via simulations that our pilot design framework achieves lower mean squared error than the common pilot allocation framework previously considered for pilot contamination mitigation.

	\end{abstract}
	
	\vspace{-0.5cm}
	\section{Introduction}
	\label{sec:Intro}
	\vspace{-0.15cm}
	Massive \ac{mimo} wireless systems have emerged as a leading candidate
	for 5G wireless access \cite{andrews_what_2014,rusek_scaling_2013},
	offering increased throughput, which is scalable with the number of antennas.
	However, to fully utilize the possibilities of such large-scale arrays, accurate
	channel state information is crucial, making the channel estimation
	task a critical step in the communication process. 
	
	Channel estimation is typically performed by sending known pilot sequences from the \acp{ut} to the
	\ac{bs}, where different \acp{ut} are assigned
	orthogonal pilots to avoid intra-cell interference. Since the number
	of pilot symbols is limited by the coherence duration of the channel and the
	desire to avoid high training overhead, a limited number of orthogonal
	pilots can be allocated \cite{marzetta_noncooperative_2010}. Consequently, orthogonal pilots are only assigned
	to users in the same cell and are typically reused between different
	cells. Pilot reuse causes inter-cell interference referred to
	as pilot contamination \cite{marzetta_noncooperative_2010,hoydis_massive_2013},
	which is believed to be one of the main performance bottlenecks of
	massive \ac{mimo} systems.
	
	Mitigating the pilot contamination effect has been the focus of considerable research attention, see, e.g., \cite{jose_pilot_2011,ashikhmin_pilot_2012,ngo_evd-based_2012,muller_blind_2014,yin_coordinated_2013,chen_pilot_2016,su_fractional_2015,yan_pilot_2015,zhu_smart_2015,fernandes_inter-cell_2013,lu2014overview}.
	The majority of works on this subject can be divided into four main families: blind channel estimation, downlink precoding, pilot allocation and pilot design.
	Blind channel estimation aims at avoiding the need for pilot information, and instead estimating the channel from the received unknown data.
	Two such techniques were suggested in \cite{ngo_evd-based_2012,muller_blind_2014}, based on subspace projection and on eigenvalue decomposition.
	Both these methods exploit the asymptotic orthogonality between the channels of different \acp{ut} in massive \ac{mimo} systems to mitigate contamination of the estimated channels. 
	This approach does not aim at optimizing the pilot sequences.
	
	Downlink precoding methods assume fixed pilot sequences, and optimize the precoding matrix. 
	Specifically, these methods first carry out standard channel estimation with full pilot reuse, resulting in pilot contamination. Then, using the contaminated estimated channel, they optimize the downlink precoding matrices in all cells to compensate for the pilot contamination effect.
	In \cite{jose_pilot_2011}, a joint downlink precoding matrix is designed by minimizing the \ac{mse}. 
	The objective consists of two parts: the \ac{mse} of each \ac{ut} in the current cell, and the interference of the \acp{ut} in all other cells. These quantities depend on all the estimated channels in the system, thus requiring the \acp{bs} to exchange large amounts of information, which results in a large overhead to the estimation process. 
	
	An alternative precoding method is suggested in \cite{ashikhmin_pilot_2012} which exploits the contamination effect, by letting the \acp{bs} serve \acp{ut} from different cells. Each BS linearly combines the downlink data designated for all the UTs that share the same pilot sequence, thus leveraging the contaminated channel.  
	In this case, only the downlink data of all users and the channels' long-term statistics need to be shared between different cells. 
	Downlink precoding techniques optimize only the precoding matrices and not the pilot sequence, and can be applied with any given sequences. 
	In particular, they may be applied after a pilot optimization step, which, as we show in this work, contributes significantly to reducing the overall contamination.
	
	In contrast to the previous two approaches, pilot allocation methods optimize the pilot sequences themselves. 
	In this framework, the same predefined set of potential pilot sequences is used in all cells. The optimization is over the allocation of the sequences to the users, such that the cross interference caused by users that share the same sequence is minimized. 
	A common approach for pilot allocation is based on angle-of-arrival (AOA) of the \acp{ut}' signal to the \acp{bs}. One such example is  \cite{yin_coordinated_2013}, which aims to exploit the spatial orthogonality between different users and proposes a greedy pilot assignment
	algorithm, which minimizes the sum of estimation errors in multipath channels by assigning the same pilot sequence to \acp{ut} with non-overlapping angle spread. 
	Later results in \cite{khormuji_pilot-decontamination_2016,saxena_mitigating_2015} suggest that this method achieves the same performance as a simple random pilot assignment.
	Partial pilot reuse is proposed in \cite{su_fractional_2015,yan_pilot_2015},
	assigning non-orthogonal pilots only to cell-centered \acp{ut}, while
	\acp{ut} located at the edge of the cell are assigned orthogonal ones.
	The pilot allocation scheme in \cite{zhu_smart_2015} is based on an iterative algorithm across the different cells, 
	assigning the pilot sequence with the weakest interference to the \ac{ut} with the most attenuated channel in each iteration. 
	However, this method is only suitable when the number of \acp{ut} in each cell equals the number of pilot symbols.
	In \cite{fernandes_inter-cell_2013} the authors propose a transmission
	scheme based on time-shifted pilots, in which the cells are partitioned into groups and the scheduling
	of pilots is shifted in time from one group to the next, canceling
	interference between adjacent cells. 
	All the above works {\em assume given pilot sequences}, and do not attempt to {\em design} the sequences, but limit the optimization to their allocation.
	
	Previous pilot sequence design methods typically consider a single cell perspective  \cite{pang_optimal_2007,liu_training_2007,kotecha_transmit_2004,noh_pilot_2014,bogale_pilot_2014,bjornson_framework_2010}. 
	These works treat a single desired channel with multiple interferers and optimize the pilot sequences in the specific cell to minimize its channel estimation error assuming all the pilots of surrounding cells are given.
	Generalization of these solutions to multiple cells for jointly designing all pilot matrices in the system jointly is not trivial.
	A possible straight forward approach for generalizing these methods is to iteratively apply the single-cell solution over each cell independently. However, there is no guarantee that this extension improves the overall channel estimation performance, as the optimization focuses only a single cell. 
	In contrast, when jointly optimizing over the entire network, we show that under our channel model the solution for the joint-design problem is either obtained in closed-form or is based on a greedy algorithm, hence it is guaranteed to improve the overall channel estimation accuracy.

	None of the previous works on pilot contamination consider the problem of {\em jointly designing the pilot sequences} across the cells.
	Furthermore, previous works on pilot contamination mitigation assume dedicated RF chains per antenna. 
	As such architecture is too costly in terms of hardware, and the massive amount of data received from hundreds of antennas at each \ac{bs} constitutes a huge computational challenge, 
	it is desirable to reduce the amount of RF chains via analog combining while maintaining high accuracy in estimating the channel. 
	Previous works on analog combiner design \cite{alkhateeb_channel_2014,ayach_spatially_2014,li_hybrid_2017,mendez-rial_hybrid_2016,yu_alternating_2016,yu_partially-connected_2017,kim_mse-based_2015,ioushua_eldar_hybrid_2017} 
	focused on the problems of channel estimation and hybrid beamforming, under various hardware constraints. However, to the best of our knowledge, despite its practical importance, {\em joint analog combiner and pilot sequence design} has not been studied to date.
	
	In this work, we treat the joint design of pilot sequences and analog combiners, aimed at minimizing the sum of \acp{mse} across all cells, for a fixed number of RF chains. 
	We consider a general statistical channel model in which the channel's correlation matrices have a Kronecker structure and focus on two scenarios: fully-separable and partially-separable correlations. 
	For both cases, the only information assumed to be shared between different cells is the long-term statistics of the channels.
	We prove that, in both scenarios, the analog combiners can be first designed independently of the pilot sequences by applying existing algorithms such as the ones in \cite{ioushua_eldar_hybrid_2017}. Given the resulting combiners, the pilot sequences are designed,
	where the solution depends on the channel model.
	
	In the fully-separable case \cite{kermoal_stochastic_2002}, the optimal pilot sequences can be obtained in closed-form.
	In particular, when the transmit correlation matrix is diagonal, the received pilots correspond to user selection, i.e., choosing the \acp{ut} with strongest average links to their \ac{bs}. 
	For the partially-separable correlation model, we express the pilot sequence design problem as a maximization of a weighted sum of ratio traces. We then  propose a greedy algorithm that generalizes a previously suggested method for maximizing a single ratio trace, the \ac{GRTM} \cite{ioushua_eldar_hybrid_2017}, to a sum. 
	At each iteration of the \ac{GSRTM}, one additional pilot symbol is enabled, and the algorithm chooses the optimal symbol to add for each user, given the previous selections.
	
	We demonstrate the advantage of joint pilot design in simulations for both cases. We compare the suggested algorithms with other design methods and state-of-the-art pilot allocation techniques, and show that our algorithms enjoy lower sum-\ac{mse}.
	
	The rest of this paper is organized as follows: 
	Section~\ref{sec:model} presents the system model and problem formulation. 
	Section~\ref{sec:MMSEestimation} derives the \ac{mmse} channel estimation problem.
	In Section~\ref{sec:AnalogCombiner} the analog combiner design problem is studied. 
	Pilot sequence optimization is treated in Section~\ref{sec:PilotDesign}.  
	Section~\ref{sec:Esperiments} illustrates the performance of our design in simulation examples.

	Throughout the paper the following notations are used:
	We denote column vectors with boldface letters, e.g., ${\bf {x}}$, and matrices with boldface upper-case letters,  e.g., $\boldsymbol{X}$.
	The set of complex numbers is denoted by $\mathbb{C}$, $\mathcal{CN}$ is the complex-normal distribution, $\boldsymbol{1}_n$ is the $n \times 1$ all ones vector, and $\boldsymbol{I}_n$ is the $n\times n$ identity matrix.  
	Hermitian transpose, transpose, Frobenius norm, and Kronecker product  are denoted by $(\cdot)^*$, $(\cdot)^T$, $\|\cdot\|_F$, and  $\otimes$, respectively. 
	For two $n \times 1$ real-valued vectors ${\bf x}_1, {\bf x}_2$, the inequality ${\bf x}_1 \le {\bf x}_2$ indicates that all the entries of ${\bf x}_1$ are less than or equal to the corresponding entries of ${\bf x}_2$. 
	For an $n \times n$ matrix $\boldsymbol{X}$, ${\rm {tr}}\left(\boldsymbol{X}\right)$ is the trace of $\boldsymbol{X}$, 
	$\lambda_{i}\left(\boldsymbol{X}\right)$ is the $i$-th largest real eigenvalue of $\boldsymbol{X}$ and
	$\mbox{vec}\left(\boldsymbol{X}\right)$ is the $n^2 \! \times \! 1$ column vector obtained by stacking the columns of $\boldsymbol{X}$ one below the other.
	The $n \times n$ diagonal matrix $\mbox{diag}\left( \boldsymbol{x}\right)$ has the vector $\boldsymbol{x}$ on its diagonal,
	and $\mbox{diag}^{-1}\left( \boldsymbol{X}\right)$ is the $n\times 1$ vector whose entries are the diagonal elements of $\boldsymbol{X}$. 
	Finally, for a sequence of $n \times m$ matrices $\{\boldsymbol{X}_i\}_{i=1}^{k}$, $\mbox{blkdiag}\left( \boldsymbol{X}_1,\ldots,\boldsymbol{X}_k\right)$ is the $kn \times km$ block-diagonal matrix with on-diagonal matrices $\{\boldsymbol{X}_i\}_{i=1}^{k}$.

	\vspace{-0.1cm}
	\section{Problem Formulation and Channel Model}
	\label{sec:model}
	\subsection{Problem Formulation}
	Consider a network of $M$ time-synchronized cells with full spectrum reuse. 
	In each cell, a \ac{bs}   equipped with $\Nbs$ antennas and $\Nrf\leq \Nbs$ RF chains serves $K$ single antenna \acp{ut}. 
	At the \ac{bs}, a network of analog components, typically phase shifters and switches, maps the $\Nbs$ antennas to the $\Nrf$ RF chains. 
	In the digital domain, only the reduced number of inputs, $\Nrf$, are accessible.
	
	In the uplink channel estimation phase, each \ac{ut} sends $\Tpilots<MK$ training symbols to the \ac{bs}, during which the channel is assumed to be constant. 
	Let $\boldsymbol{s}_{ik}$  be the $\Tpilots \times 1$ pilot sequence vector of the $k$-th user in the $i$-th cell, assumed to be subject to a  per-user power constraint  $\boldsymbol{s}_{ik}^{*}\boldsymbol{s}_{ik}\leq\mathcal{P}$.  
	The pilot sequence matrix of the \acp{ut} in the $i$-th cell is denoted by $\boldsymbol{S}_{i} = \big[\boldsymbol{s}_{i1}, \ldots,  \boldsymbol{s}_{iK}\big]$.
	Let $\boldsymbol{H}_{ij}\in\mathbb{C}^{\Nbs\times K}$  be the channel matrix between the \acp{ut} of the $j$-th cell and the $i$-th \ac{bs},
	and $\boldsymbol{W}_{ij}\in\Wrf$ be the analog combiner matrix at the $i$-th \ac{bs}, where $\Wrf \subseteq \mathbb{C}^{\Nbs\times \Nrf}$ is the feasible set of analog matrices. The properties of $\Wrf$ are discussed below.

	We consider an interference limited case, based on the analysis in \cite{marzetta_noncooperative_2010}, which shows that in the limit of an infinite number of \ac{bs} antennas, the effect of uncorrelated noise vanishes. 
	Thus, the (discrete-time) received signal at the $i$-th cell \ac{bs}, $\boldsymbol{Y}_{i} \in \mathbb{C}^{\Nbs\times \Tpilots}$, $1\leq i\leq M$, is given by
	\begin{equation}
	\boldsymbol{Y}_{i}  =\boldsymbol{W}_{i}\boldsymbol{H}_{ii}\boldsymbol{S}_{i}^{T}+\boldsymbol{W}_{i}\sum_{j\neq i}\boldsymbol{H}_{ij}\boldsymbol{S}_{j}^{T}.\label{eq:Receive Signal}
	\end{equation}
	At each \ac{bs} the channel $\boldsymbol{H}_{ii}$ is estimated from the measurements $\boldsymbol{Y}_{i}$, using an \ac{mmse} estimator, where the  expectation is taken over the channels $\boldsymbol{H}_{ij}$, $\Ind$.
	We note that the $i$th \ac{bs}, $1\leq i\leq M$, needs to estimate only the channel to its corresponding \acp{ut}, namely, $\boldsymbol{H}_{ii}$.
	
	The pilot contamination problem implies that the estimation of $\boldsymbol{H}_{ii}$ is degraded by the presence of the interfering channels $\boldsymbol{H}_{ij}$, $i\neq j$, in the received signal $\boldsymbol{Y}_{i}$. Since the dimension $\tau$ of the pilot matrices, $\left\{\boldsymbol{S}_i\right\}_{i=1}^M$ is smaller than the total number of users in the system $MK$, these interferences can not be fully eliminated at the $i$th \ac{bs}. 
	We wish to jointly design the pilot sequences and analog combiners $\left\{ \boldsymbol{S}_{i},\boldsymbol{W}_{i}\right\} _{i=1}^{M}$ to minimize the sum of \acp{mse} of the desired channels $\boldsymbol{H}_{ii}$ from the observations $\boldsymbol{Y}_{i}$, $1\leq i\leq M$, over the entire network, subject to the per-user power constraint on $\boldsymbol{S}_{i}$, and to the feasible set of $\boldsymbol{W}_{i}$. 
	
	The feasible set $\Wrf$ depends on the specific architecture of the analog network, which can vary according to a different power, area, and budget constraints. 
	For example, in the case of a fully-connected phase shifters network, $\Wrf$ is the set of $\Nrf \times \Nbs$ unimodular matrices.
	For a review of common hardware schemes, the reader is referred to \cite{mendez-rial_hybrid_2016,ioushua_eldar_hybrid_2017}.
	
	Designing pilot sequences in the interference-limited setup \eqref{eq:Receive Signal} without RF reduction has been previously considered by \cite{bjornson_framework_2010,kotecha_transmit_2004} under a fully-separable Kronecker model. 
	These works treated a single desired channel, $\boldsymbol{H}_{11}$, with multiple interferers, $\boldsymbol{H}_{1j}$, $j=2,\cdots,M$, that share the same receive side correlation. 
	In contrast to our approach, they optimized only the pilot sequence of the specific cell, $\boldsymbol{S}_{1}$, to minimize its channel estimation error assuming all the interferers pilot matrices $\boldsymbol{S}_{j}$, $j=2,\cdots,M$, are given.
	This leads to a water filling solution, assigning more power to directions with larger channel gains and weaker interference. 
	Here, we aim at jointly optimizing the pilot matrices of all cells. Furthermore, we consider both the fully- and partially-separable correlation structures for the channel's covariance, rather than just the first.
	
	\subsection{Channel Model}
	We focus on statistical channel models obeying the following structure:
	\begin{equation}
	\boldsymbol{H}_{ij}=\boldsymbol{Q}_{ij}^{\frac{1}{2}}\boldsymbol{\bar{H}}_{ij}\boldsymbol{P}_{ij}^{\frac{1}{2}}.\label{eq:GeneralModel}
	\end{equation}
	Here $\boldsymbol{Q}_{ij}\in\mathbb{C}^{\Nbs\times \Nbs},\boldsymbol{P}_{ij}\in\mathbb{C}^{K\times K}$ are deterministic \ac{psd} Hermitian matrices, referred to as the correlation matrices, and  $\boldsymbol{\bar{H}}_{ij}\in\mathbb{C}^{\Nbs\times K},  1\leq i,j\leq M$
	are random matrices with i.i.d complex-normal zero-mean unit variance entries, representing the fast-fading channel coefficients. 
	We assume that $\boldsymbol{Q}_{ij},\boldsymbol{P}_{ij}$, $1\leq i,j\leq M$, are known at all the \acp{bs}. 
	We focus on two special cases of (\ref{eq:GeneralModel}):
	\begin{enumerate}
		\item \textbf{Fully-Separable Correlations.}
		In this case,  
		\begin{equation}
		\begin{aligned}
		\boldsymbol{Q}_{ij}=\boldsymbol{Q}_{i},\;
		\boldsymbol{P}_{ij}=\boldsymbol{P}_{j},\label{eq:FullySeparable}
		\end{aligned}
		\end{equation}
		i.e., the receive and transmit side correlations are independent of each other.
		This model is also known as the \textit{doubly correlated Kronecker model} \cite{tulino_impact_2005}.
		Here $\boldsymbol{Q}_{i}=\boldsymbol{R}_{r_{i}}$ is the receive side correlation matrix of the $i$-th cell (assumed here for simplicity to be full rank), 
		and $\boldsymbol{P}_{j}=\boldsymbol{R}_{t_{j}}$ the transmit side correlation matrix of the users in the $j$-th cell.
		It is a reasonable assumption that $\boldsymbol{P}_{j}$ is an invertible diagonal matrix due to the distance between different \acp{ut}. However, in our setup we allow general transmit correlation matrices.
		
		The fully-separable model implies that the transmitters do not affect the spatial properties of the received signal and vice versa.
		This model was shown to accurately model systems using space diversity arrays \cite{tulino_impact_2005} and to fit various wireless scenarios, such as indoor MIMO environments \cite{Yu01secondorder}. 
		Nonetheless, we note that it does not account for the distance of different users from the same BS, and it inherently rules out spatial orthogonality between different \acp{ut}.
		\label{model:Kronecker}
		
		\smallskip
		\item \textbf{Partially Separable Correlations.} 
		In this model only $\boldsymbol{Q}_{ij}$ is separable, that is
		\begin{equation}
		\begin{aligned}
		\boldsymbol{Q}_{ij}=\boldsymbol{Q}_{i}.\label{eq:PartiallSeparable}
		\end{aligned}
		\end{equation}
		
		An example of a partially-separable channel is the widely used model of \cite{marzetta_noncooperative_2010}, which hereinafter we refer to as the {\em MU-MIMO channel fading model}. In this case, the columns of $\boldsymbol{H}_{ij}$, denoted $\{\boldsymbol{g}_{ikj}\}_{k=1}^{K}$,  are independent random vectors distributed as $\boldsymbol{g}_{ikj}\sim\mathcal{CN}\left(0,\beta_{ikj}\boldsymbol{I}_{\Nbs}\right)$ for some set of decay factors $\{\beta_{ikj}\}$. The channel matrices can then be written as     
		\begin{equation}
		\boldsymbol{H}_{ij}=\boldsymbol{\bar{H}}_{ij}\boldsymbol{D}_{ij}^{\frac{1}{2}}, \label{eq:MarzetaModel}
		\end{equation}
		with $\boldsymbol{D}_{ij}=\text{diag}\left(\beta_{i1j}\cdots\beta_{iKj}\right)$,
		resulting in 
		\begin{equation}
		\begin{aligned}
		\boldsymbol{Q}_{i}=\boldsymbol{I}_{\Nbs}, \; 
		\boldsymbol{P}_{ij}=\boldsymbol{D}_{ij}.\label{MarzettaCorrelations}
		\end{aligned}
		\end{equation} 
	\end{enumerate}
	
	In this work, the correlation matrices $\{\boldsymbol{Q}_{ij},\boldsymbol{P}_{ij}\}$ are assumed to be a-priori known. 
	In practice, they need to be estimated, which can be computationally complex in massive \ac{mimo} systems due to the large number of estimated matrix entries. 
	However, under the Kronecker structures considered here, the number of parameters to be estimated is significantly reduced. 
	Moreover, the correlation matrices are typically long-term statistics, varying much slower compared to the fast-fading channel coefficients, so that estimation can be performed once every many coherence intervals, without substantial overhead.

	\vspace{-0.2cm}
	\section{MMSE Channel Estimation}
	\label{sec:MMSEestimation}
	\vspace{-0.1cm}
	We begin by deriving the \ac{mmse} channel estimator and its resulting
	\ac{mse} under the model (\ref{eq:GeneralModel}). 
	By vectorizing the received signal in (\ref{eq:Receive Signal}), $\boldsymbol{y}_{i} = {\rm vec}\left( \boldsymbol{Y}_{i}\right)$,  we
	have
	\begin{equation}
	\boldsymbol{y}_{i} =\left(\boldsymbol{S}_{i}\otimes \boldsymbol{W}_{i}\right)\boldsymbol{h}_{ii}+\sum_{j\neq i}\left(\boldsymbol{S}_{j}\otimes \boldsymbol{W}_{i}\right)\boldsymbol{h}_{ij},\label{eq:vectorized}
	\end{equation}
	where $\boldsymbol{h}_{ij}=\mbox{vec}\left(\boldsymbol{H}_{ij}\right)$ and
	$\boldsymbol{h}_{ij}\sim\mathcal{CN}\left(\boldsymbol{0},\boldsymbol{P}_{ij}\otimes \boldsymbol{Q}_{ij}\right)$.
	The $i$-th \ac{bs} estimates its desired channel $\boldsymbol{h}_{ii}$ 
	using the \ac{mmse} estimator \cite[Chp. 12]{Kay:1993:FSS:151045}, given by 
	\begin{equation}
	\hspace{-0.3cm}\hat{\boldsymbol{h}}_{ii} \!=\!\left(\boldsymbol{C}_{ii}\boldsymbol{S}_{i}^{*}\!\otimes\!\boldsymbol{Q}_{ii}\boldsymbol{W}_{i}^{*}\right)\!\!\left[\sum_{j=1}^{M}\!\left(\boldsymbol{S}_{j}\boldsymbol{P}_{ij}\boldsymbol{S}_{j}^{*}\!\otimes\!\boldsymbol{W}_{i}\boldsymbol{Q}_{ij}\boldsymbol{W}_{i}^{*}\right)\right]^{-1}\hspace{-0.4cm}.\label{eq:LMMSEestimatorGeneral}
	\end{equation}
	Here, we assume all inverses exist, but similar derivation can be made using the pseudo-inverse.
	The corresponding estimation error for \eqref{eq:LMMSEestimatorGeneral} is given by 
	\begin{align}
	\epsilon_{i} & =\mbox{tr}\left(\boldsymbol{A}_{i}\right)-\mbox{tr}\left(\boldsymbol{B}_{i}\boldsymbol{C}_{i}\boldsymbol{B}_{i}^{*}\right),\label{eq:MMSE_ErrorGeneral}
	\end{align}
	with $\boldsymbol{A}_{i}\triangleq\boldsymbol{P}_{ii}\otimes\boldsymbol{Q}_{ii}$,
	\begin{equation}\label{eq:Bi}
	\boldsymbol{B}_{i}\triangleq\boldsymbol{P}_{ii}\boldsymbol{S}_{i}^{*}\otimes\boldsymbol{Q}_{ii}\boldsymbol{W}_{i}^{*},
	\end{equation}
	and 
	\begin{equation}\label{eq:Ci}
	\boldsymbol{C}_{i}\triangleq\left[\sum_{j=1}^{M}\boldsymbol{S}_{j}\boldsymbol{P}_{ij}\boldsymbol{S}_{j}^{*}\otimes\boldsymbol{W}_{i}\boldsymbol{Q}_{ij}\boldsymbol{W}_{i}^{*}\right]^{-1}.
	\end{equation}
	
	We wish to design the pilot sequence matrices $\boldsymbol{S}_{i}$ and analog
	combining matrices $\boldsymbol{W}_{i}$ for $1\leq i\leq M$ to minimize the sum
	of the errors given by
	\begin{align}
	\epsilon =\sum_{i=1}^{M}\epsilon_i  =\sum_{i=1}^{M}\left(\mbox{tr}\left(\boldsymbol{A}_{i}\right)-\mbox{tr}\left(\boldsymbol{B}_{i}\boldsymbol{C}_{i}\boldsymbol{B}_{i}^{*}\right)\right),\label{eq:sum_of_errors}
	\end{align}
	under the given power and hardware constraints. 
	This approach will result in a different solution than the single target cell approach in  \cite{pang_optimal_2007,liu_training_2007,kotecha_transmit_2004,noh_pilot_2014,bogale_pilot_2014,bjornson_framework_2010}, as an interfering user in one cell
	is the desired user in another. 
	
	Since the matrices $\boldsymbol{A}_{i}$ do not depend on the optimization variables
	$\boldsymbol{W}_{i}$ and $\boldsymbol{S}_{i}$, minimizing $\epsilon$ is equivalent
	to maximization of $\sum_{i}\mbox{tr}\left(\boldsymbol{B}_{i}\boldsymbol{C}_{i}\boldsymbol{B}_{i}^{*}\right)$, which yields the following optimization problem:
	%
	\begin{align}
	\label{eq:GeneralOptimization}
	\underset{_{{\boldsymbol{S}_{i},\boldsymbol{W}_{i}}}}{\max} & \;\sum_{i}\mbox{tr}(\boldsymbol{B}_{i}\boldsymbol{C}_{i}\boldsymbol{B}_{i}^{*})&\\ \notag 
	\mbox{s.t.}\;\;\; & \;\boldsymbol{s}_{ik}^{*}\boldsymbol{s}_{ik}\leq\mathcal{P},\;\; 1\leq i\leq M, \;1\leq k\leq K,& \\ \notag
	& \;\boldsymbol{W}_{i}\in\Wrf,\; 1\leq i\leq M\quad & 
	\end{align}
	where $\boldsymbol{B}_{i}$ and $\boldsymbol{C}_{i}$ are given by \eqref{eq:Bi} and \eqref{eq:Ci} respectively.
	
	In the next section, we show that in both scenarios \eqref{eq:FullySeparable} and \eqref{eq:PartiallSeparable} the combiner design can be performed independently of the pilot sequences. Furthermore, previously suggested methods such as those of \cite{ioushua_eldar_hybrid_2017} can be used. In Section~\ref{sec:PilotDesign} we solve the remaining pilot design problem for each scenario separately. 
	
	\section{Analog Combiner Design}
	\label{sec:AnalogCombiner}
%
	\subsection{Analog Combiner Design Problem Formulation}
	Note that in both correlation scenarios \eqref{eq:FullySeparable} and \eqref{eq:PartiallSeparable} we have $\boldsymbol{Q}_{ij}=\boldsymbol{Q}_{i}$. 
	By defining
	
	\begin{equation}
	w_{i}  \triangleq\mbox{tr}\left(\boldsymbol{Q}_{i}\boldsymbol{W}_{i}^{*}\left(\boldsymbol{W}_{i}\boldsymbol{Q}_{i}\boldsymbol{W}_{i}^{*}\right)^{-1}\boldsymbol{W}_{i}\boldsymbol{Q}_{i}\right), \label{eq:wights}
	\end{equation}
	and using standard Kronecker product properties, \eqref{eq:GeneralOptimization} becomes  
	\begin{align}
	\label{eq:GeneralOptimization2}
	\underset{_{\boldsymbol{S}_{i},\boldsymbol{W}_{i}}}{\max} & \;\sum_{i}w_i \cdot \mbox{tr}\left(\boldsymbol{S}_{i}\boldsymbol{P}_{ii}^{2}\boldsymbol{S}_{i}^{*}\left[\sum_{j}^{M}\boldsymbol{S}_{j}\boldsymbol{P}_{ij}\boldsymbol{S}_{j}^{*}\right]^{-1}\right)&\\ \notag 
	\mbox{s.t.}\;\;\; & \;\boldsymbol{s}_{ik}^{*}\boldsymbol{s}_{ik}\leq\mathcal{P},\;\; 1\leq i\leq M, \;1\leq k\leq K,\quad \\ \notag
	& \;\boldsymbol{W}_{i}\in\Wrf,\; 1\leq i\leq M.\quad & 
	\end{align}
	
	Since the matrices $\boldsymbol{P}_{ij}$ are \ac{psd} for all $1\leq i,j\leq M$, it follows that
	\begin{equation*}
	\mbox{tr}\left(\boldsymbol{S}_{i}\boldsymbol{P}_{ii}^{2}\boldsymbol{S}_{i}^{*}\left[\sum_{j}^{M}\boldsymbol{S}_{j}\boldsymbol{P}_{ij}\boldsymbol{S}_{j}^{*}\right]^{-1}\right)\geq 0,\; 1\leq i\leq M .
	\end{equation*} 
	Consequently, the objective in \eqref{eq:GeneralOptimization2} is a weighted sum of non-negative elements, and hence an increasing function in all the weights $w_i$, $1\leq i\leq M$. 
	Since the constraint on $\boldsymbol{W}_i$ is independent of $\left\{\boldsymbol{S}\right\}_{i=1}^M$ and of $\boldsymbol{W}_j$ for $j \neq i$, we can maximize the sum in \eqref{eq:GeneralOptimization2} by maximizing each weight $w_i$ separately for each cell and independently of $\left\{\boldsymbol{S}\right\}_{i=1}^M$. The resulting $w_i$, $\Ind$, can be plugged back into \eqref{eq:GeneralOptimization2} and the problem can be solved with respect to the pilot matrices. 
	Note that since the optimization of the combiner is independent of the pilot sequences, no alternations between the two is required.
	
	The optimization problem for the analog combiner of the $i$-th \ac{bs} is therefore
	\begin{eqnarray}
	\label{eq:W_optimization}
	& \underset{_{\boldsymbol{W}_i}}{\max} & \mbox{tr}\left(\boldsymbol{W}_i\boldsymbol{Q}_{i}^{2}\boldsymbol{W}_{i}^{*}\left(\boldsymbol{W}_{i}\boldsymbol{Q}_{i}\boldsymbol{W}_{i}^{*}\right)^{-1}\right)\\ \notag
	& \mbox{s.t.}\;\; & \boldsymbol{W}_{i}\in\Wrf.
	\end{eqnarray}
	Since \eqref{eq:W_optimization} can be solved for each \ac{bs} separately, for clarity, we drop the index $i$.
	This problem has been previously studied in \cite{spawc_paper,ioushua_eldar_hybrid_2017}, under various hardware constraints.
	In \cite{spawc_paper}, a permissive analog scheme that consists of phase shifters, gain shifters and switches was considered. 
	The feasible set $\Wrf$ in this case is the set of all matrices $\boldsymbol{W}\in \mathbb{C}^{\Nrf \times \Nbs}$. 
	For this choice, the family of optimal solutions to \eqref{eq:W_optimization} is given by
	\begin{equation}
	\boldsymbol{W}_{\rm FD} = \boldsymbol{T}\tilde{\boldsymbol{U}}^{*},\label{eq:WfullyDigital}
	\end{equation}
	where $\tilde{\boldsymbol{U}}$ is an $\Nrf \times \Nbs$ matrix whose columns are the eigenvectors of the Hermitian matrix $\boldsymbol{Q}$ corresponding to its $\Nrf$ largest eigenvalues, and $\boldsymbol{T}$ is some $\Nrf\times \Nrf$ invertible matrix.
	The combiner \eqref{eq:WfullyDigital} is commonly referred to as the fully-digital solution. 
	The specific solution derived in \cite{spawc_paper} is \eqref{eq:WfullyDigital} with $\boldsymbol{T}=\boldsymbol{I}_{\Nrf}$.
	Problem \eqref{eq:W_optimization} is also a specific case of the framework treated in \cite{ioushua_eldar_hybrid_2017}. There, more restrictive hardware schemes were considered, under which the optimal solution for \eqref{eq:W_optimization} may be intractable. 
	For these scenarios, two low complexity algorithms were suggested to obtain a sub-optimal solution: \ac{magiq} and \ac{GRTM}.
	Both techniques may be used here to solve \eqref{eq:W_optimization}. 
	For completeness, in the following subsection we describe \ac{magiq}, \cite{ioushua_eldar_hybrid_2017}. 
	
	After the analog combiners are designed for each cell separately, the pilot sequences are optimized given the resulting combiners.
	The impact of the RF reduction on the channel's MSE is illustrated via simulation examples in Section~\ref{sec:Esperiments}.
	
	\subsection{MaGiQ - Minimal Gap Iterative Quantization}
	\label{sec:MaGiQ}
	
	One approach for treating \eqref{eq:W_optimization}, is to approximate the fully-digital solution \eqref{eq:WfullyDigital} with an analog one. 
	It was shown in \cite{ioushua_eldar_hybrid_2017} that minimizing the approximation gap between the analog combiner $\boldsymbol{W}$ and the fully-digital combiner $\boldsymbol{W}_{\rm FD}$ minimizes an upper bound on the MSE \eqref{eq:W_optimization}. 
	Motivated by this insight, it was suggested to consider the problem 
	\begin{eqnarray}
	\label{eq:WT_optimization}
	& \underset{_{\boldsymbol{W},\boldsymbol{T}}}{\min} & \Vert \boldsymbol{T}\tilde{\boldsymbol{U}}^{*} - \boldsymbol{W}\Vert_{F}^2\\
	& \mbox{s.t.}\;\; & \boldsymbol{W}\in\Wrf,\quad \boldsymbol{T}\in{\mathcal{U}},\notag
	\end{eqnarray}  
	where $\mathcal{U}$ is the set of unitary $\Nrf\times \Nrf$ matrices.
	Different from previous approaches \cite{mendez-rial_hybrid_2016,yu_alternating_2016} which considered $\boldsymbol{T}=\boldsymbol{I}_{\Nrf}$ and optimized \eqref{eq:WT_optimization} over $\boldsymbol{W}$ alone, here the optimization is carried out with respect to both $\boldsymbol{W}$ and $\boldsymbol{T}$. 
	To solve \eqref{eq:WT_optimization}, an alternating minimization approach was suggested, referred to as \ac{magiq}.
	
	The benefit of \ac{magiq}, is that at each iteration, a closed-form solution for both variables is available.
	For fixed $\boldsymbol{W}$, the optimal $\boldsymbol{T}$ is given by 
	\begin{equation}
	\boldsymbol{T}_{\rm opt}=\bar{\boldsymbol{V}}\bar{\boldsymbol{U}}^{*}
	\end{equation}
	with $\tilde{\boldsymbol{U}}^{*}\boldsymbol{W}^{*}=\bar{\boldsymbol{U}}\boldsymbol{\Lambda}\bar{\boldsymbol{V}}^{*}$ the \ac{svd} of $\tilde{\boldsymbol{U}}^{*}\boldsymbol{W}^{*}$.
	For a given $\boldsymbol{T}$, the optimal combiner is
	\begin{equation}
	\boldsymbol{W}_{\rm opt}=\boldsymbol{P}_{\Wrf}\left(\boldsymbol{T}\tilde{\boldsymbol{U}}^{*}\right)
	\end{equation}
	with $\boldsymbol{P}_{\mathcal{W}}\left(\boldsymbol{A}\right)$ denoting the orthogonal projection of the matrix $\boldsymbol{A}$ onto the set $\mathcal{W}$.  
	\ac{magiq} is summarized in Algorithm~\ref{algo:SimpleQuan}.
	\begin{algorithm}[h] 
		\caption{MaGiQ -  Minimal Gap Iterative Quantization}
		\text{\textbf{Input:} $\tilde{\boldsymbol{U}}^{*}$, threshold $t$}  \\  
		\text{\textbf{Output:} analog combiner $\boldsymbol{W}$}\\
		\text{\textbf{Initialize} $\boldsymbol{T}=\boldsymbol{I}_{\Nrf},\boldsymbol{W}=\boldsymbol{0}$}\\
		\text{\textbf{While} $\Vert\boldsymbol{T}\tilde{\boldsymbol{U}}^{*}-\boldsymbol{W}\Vert_{F}^{2}\geq t$ do:}
		\begin{enumerate}
			\item $\boldsymbol{W}=\boldsymbol{P}_{\mathcal{W}}(\boldsymbol{T}\tilde{\boldsymbol{U}}^{*})$\label{eq:Fstep}
			\item \text{\text{Calculate} the \ac{svd} $\tilde{\boldsymbol{U}}^{*}\boldsymbol{W}^{*}=\bar{\boldsymbol{U}}\boldsymbol{\Lambda}\bar{\boldsymbol{V}}^{*}$}
			\item \text{$\boldsymbol{T}=\bar{\boldsymbol{V}}\bar{\boldsymbol{U}}^{*}$}\label{eq:Tstep}
		\end{enumerate}
		\label{algo:SimpleQuan}
	\end{algorithm}
	
	\section{Pilot Sequence Design}
	\label{sec:PilotDesign}
	After the analog combiners are optimized, the obtained weights $w_i$ can be plugged back into \eqref{eq:GeneralOptimization2}, resulting in the following problem with respect to the pilot sequences $\boldsymbol{S}_i$, $1\leq i\leq M$:
	\begin{align}
	\label{eq:GeneralOptimization3}
	\underset{_{\boldsymbol{S}_{i}}}{\max} & \;\sum_{i}w_i \cdot \mbox{tr}\left(\boldsymbol{S}_{i}\boldsymbol{P}_{ii}^{2}\boldsymbol{S}_{i}^{*}\left[\sum_{j}^{M}\boldsymbol{S}_{j}\boldsymbol{P}_{ij}\boldsymbol{S}_{j}^{*}\right]^{-1}\right)&\\ \notag 
	\mbox{s.t.}\;\;\; & \;\boldsymbol{s}_{ik}^{*}\boldsymbol{s}_{ik}\leq\mathcal{P},\;\; 1\leq i\leq M, \;1\leq k\leq K.\quad 
	\end{align}
	We now study problem \eqref{eq:GeneralOptimization3} under the channel models \eqref{eq:FullySeparable} and \eqref{eq:PartiallSeparable}. 
	In Section~\ref{sec:FullySep}, we consider the fully-separable model in \eqref{eq:FullySeparable}, under which we derive a closed-form solution for the pilot sequences $\boldsymbol{S}_{i}$. 
	Next, in Section~\ref{sec:PartiallySep}, we focus on the partially-separable model \eqref{eq:PartiallSeparable} and express the pilot design problem as a sum of quadratic ratios. We then suggest a greedy algorithm to solve it, referred to as \ac{GSRTM}
	
	\subsection{Fully Separable Correlations}
	\label{sec:FullySep}
	For the fully-separable scenario we show that although this model may lack in description for massive MIMO systems, 
	it holds a key advantage in facilitating the solution of \eqref{eq:GeneralOptimization}, as it results in a closed form expression for the optimal pilot sequences.  
	
	Under the model \eqref{eq:FullySeparable}, we have $\boldsymbol{P}_{ij}=\boldsymbol{P}_{j}$.
	By letting 
	\begin{equation}
	\boldsymbol{S}\triangleq \left[\boldsymbol{S}_{1},\cdots,\boldsymbol{S}_{M}\right]
	\label{eq:PilotsMatrix}
	\end{equation}
	be the  $\tau\times MK$ pilot matrix, 
	\begin{equation}
	\boldsymbol{\bar{P}}\triangleq\mbox{blkdiag}\left(\left[\boldsymbol{P}_{1},\cdots,\boldsymbol{P}_{M}\right]\right),
	\label{eq:TransmitMatrix}
	\end{equation} 
	the $MK\times MK$ Hermitian transmit correlation matrix and  
	\begin{align}
	\boldsymbol{\bar{W}} & \triangleq\mbox{diag}\left(\left[w_{1},\cdots,w_{M}\right]\right)\otimes \boldsymbol{I}_{K}\label{eq:weightsMatrix}
	\end{align}
	the $MK\times MK$ non-negative valued diagonal weights matrix,
	problem \eqref{eq:GeneralOptimization3} can be written as
	\begin{align}
	\label{eq:S_optimization}
	& \underset{_{\boldsymbol{S}}}{\max} \quad \mbox{tr}\left(\left(\boldsymbol{S}\boldsymbol{\bar{P}}\boldsymbol{S}^{*}\right)^{-1}\boldsymbol{S}\boldsymbol{\bar{P}}\boldsymbol{\bar{W}}\boldsymbol{\bar{P}}\boldsymbol{S}^{*}\right)\\\label{eq:S Power Constraint} 
	& \mbox{s.t.}\qquad  \mbox{diag}^{\text{-}1}\left(\boldsymbol{S}^{*}\boldsymbol{S}\right)\leq\mathcal{P}\cdot\boldsymbol{1}_{M\cdot K}.
	\end{align}
	
	In the following theorem, we derive an optimal solution $\boldsymbol{S}_{\rm opt}$, which we refer to as the eigen-pilots. 
	\begin{thm}
		\label{thm:S_optimization}
		A pilot symbol matrix which solves \eqref{eq:S_optimization}-\eqref{eq:S Power Constraint} is given by
		\begin{align}
		\boldsymbol{S}_{\rm opt} & =\sqrt{\mathcal{P}}\boldsymbol{U}_{1}^{*},\label{eq:S}
		\end{align}
		with $\boldsymbol{U}_{1}$ the $M K \times \Tpilots$ matrix whose columns are the $\Tpilots$ eigenvectors of the Hermitian matrix  $\boldsymbol{\bar{W}}\boldsymbol{\bar{P}}$ corresponding to the $\Tpilots$ largest eigenvalues.
	\end{thm}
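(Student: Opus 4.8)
The plan is to convert the matrix-ratio objective into a trace of an orthogonal projection, bound that trace via the Ky Fan maximum principle, and then check that $\boldsymbol{S}_{\rm opt}$ attains the bound while respecting the per-user power constraint.

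First I would reparametrize. Assuming $\boldsymbol{\bar{P}}\succ 0$ (in line with the standing invertibility assumptions; if $\boldsymbol{\bar{P}}$ is only PSD one works on its range with pseudo-inverses), write $\boldsymbol{\bar{P}}=\boldsymbol{\bar{P}}^{1/2}\boldsymbol{\bar{P}}^{1/2}$ with the Hermitian PSD square root and set $\boldsymbol{G}\triangleq\boldsymbol{S}\boldsymbol{\bar{P}}^{1/2}$, $\boldsymbol{M}\triangleq\boldsymbol{\bar{P}}^{1/2}\boldsymbol{\bar{W}}\boldsymbol{\bar{P}}^{1/2}$. Then $\boldsymbol{M}$ is Hermitian PSD and, being similar to $\boldsymbol{\bar{W}}\boldsymbol{\bar{P}}=\boldsymbol{\bar{P}}^{-1/2}\boldsymbol{M}\boldsymbol{\bar{P}}^{1/2}$, has the same eigenvalues $\lambda_{1}\geq\cdots\geq\lambda_{MK}\geq 0$; also $\boldsymbol{S}\boldsymbol{\bar{P}}\boldsymbol{S}^{*}=\boldsymbol{G}\boldsymbol{G}^{*}$ and $\boldsymbol{S}\boldsymbol{\bar{P}}\boldsymbol{\bar{W}}\boldsymbol{\bar{P}}\boldsymbol{S}^{*}=\boldsymbol{G}\boldsymbol{M}\boldsymbol{G}^{*}$. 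By the cyclic property of the trace, the objective of \eqref{eq:S_optimization} equals $\mbox{tr}\big(\boldsymbol{M}\,\boldsymbol{G}^{*}(\boldsymbol{G}\boldsymbol{G}^{*})^{-1}\boldsymbol{G}\big)=\mbox{tr}(\boldsymbol{M}\boldsymbol{\Pi})$, where $\boldsymbol{\Pi}\triangleq\boldsymbol{G}^{*}(\boldsymbol{G}\boldsymbol{G}^{*})^{-1}\boldsymbol{G}$ is the orthogonal projection onto the row space of $\boldsymbol{S}$; its rank is $\Tpilots$, since the assumed invertibility of $\boldsymbol{S}\boldsymbol{\bar{P}}\boldsymbol{S}^{*}$ forces $\boldsymbol{G}$ (hence $\boldsymbol{S}$) to have full row rank. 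In particular the objective is invariant under $\boldsymbol{S}\mapsto\boldsymbol{T}\boldsymbol{S}$ for invertible $\boldsymbol{T}$ (explaining the non-uniqueness hinted at by ``a pilot symbol matrix'') and depends on $\boldsymbol{S}$ only through the $\Tpilots$-dimensional subspace $\mathrm{range}(\boldsymbol{G}^{*})$.

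Next I would establish the upper bound. For any rank-$\Tpilots$ orthogonal projection $\boldsymbol{\Pi}=\boldsymbol{V}\boldsymbol{V}^{*}$ with $\boldsymbol{V}$ having orthonormal columns, $\mbox{tr}(\boldsymbol{M}\boldsymbol{\Pi})=\mbox{tr}(\boldsymbol{V}^{*}\boldsymbol{M}\boldsymbol{V})\leq\sum_{i=1}^{\Tpilots}\lambda_{i}(\boldsymbol{M})=\sum_{i=1}^{\Tpilots}\lambda_{i}(\boldsymbol{\bar{W}}\boldsymbol{\bar{P}})$ by the Ky Fan maximum principle, with equality iff $\mathrm{range}(\boldsymbol{\Pi})$ is an eigenspace of $\boldsymbol{M}$ spanned by eigenvectors for its $\Tpilots$ largest eigenvalues. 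This bounds the objective for every $\boldsymbol{S}$, feasible or not. To see that $\boldsymbol{S}_{\rm opt}=\sqrt{\mathcal{P}}\boldsymbol{U}_{1}^{*}$ achieves it, note $\boldsymbol{G}_{\rm opt}^{*}=\sqrt{\mathcal{P}}\boldsymbol{\bar{P}}^{1/2}\boldsymbol{U}_{1}$, so $\mathrm{range}(\boldsymbol{G}_{\rm opt}^{*})=\mathrm{span}\{\boldsymbol{\bar{P}}^{1/2}\boldsymbol{u}_{1},\ldots,\boldsymbol{\bar{P}}^{1/2}\boldsymbol{u}_{\Tpilots}\}$ with $\boldsymbol{u}_{1},\ldots,\boldsymbol{u}_{\Tpilots}$ the top eigenvectors of $\boldsymbol{\bar{W}}\boldsymbol{\bar{P}}$; since $\boldsymbol{M}(\boldsymbol{\bar{P}}^{1/2}\boldsymbol{u}_{i})=\lambda_{i}(\boldsymbol{\bar{P}}^{1/2}\boldsymbol{u}_{i})$, this span is exactly the top-$\Tpilots$ eigenspace of $\boldsymbol{M}$, so the corresponding $\boldsymbol{\Pi}_{\rm opt}$ meets the equality condition and the objective equals $\sum_{i=1}^{\Tpilots}\lambda_{i}(\boldsymbol{\bar{W}}\boldsymbol{\bar{P}})$.

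Finally I would verify feasibility. The block structure $\boldsymbol{\bar{W}}=\mbox{diag}(w_{1},\ldots,w_{M})\otimes\boldsymbol{I}_{K}$ and $\boldsymbol{\bar{P}}=\mbox{blkdiag}(\boldsymbol{P}_{1},\ldots,\boldsymbol{P}_{M})$ makes $\boldsymbol{\bar{W}}\boldsymbol{\bar{P}}=\mbox{blkdiag}(w_{1}\boldsymbol{P}_{1},\ldots,w_{M}\boldsymbol{P}_{M})$ Hermitian, so its eigenvectors can be taken orthonormal and $\boldsymbol{U}_{1}$ has orthonormal columns. Then $\boldsymbol{S}_{\rm opt}^{*}\boldsymbol{S}_{\rm opt}=\mathcal{P}\,\boldsymbol{U}_{1}\boldsymbol{U}_{1}^{*}$, whose $k$-th diagonal entry is $\mathcal{P}\sum_{j=1}^{\Tpilots}|(\boldsymbol{U}_{1})_{kj}|^{2}\leq\mathcal{P}$, because each row of $\boldsymbol{U}_{1}$ is a truncation of a row of a unitary matrix and thus has norm at most one; hence $\mbox{diag}^{-1}(\boldsymbol{S}_{\rm opt}^{*}\boldsymbol{S}_{\rm opt})\leq\mathcal{P}\boldsymbol{1}_{MK}$ and $\boldsymbol{S}_{\rm opt}$ is feasible. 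Being feasible and meeting the universal upper bound, it solves the problem. I expect feasibility to be the main obstacle: the per-user constraint caps every diagonal entry of $\boldsymbol{S}^{*}\boldsymbol{S}$ individually, which is stronger than a trace/sum-power cap, and it is not obvious a priori that an eigenvector solution obeys it; it does precisely because the Kronecker/block-diagonal structure of $\boldsymbol{\bar{W}}$ and $\boldsymbol{\bar{P}}$ renders $\boldsymbol{\bar{W}}\boldsymbol{\bar{P}}$ Hermitian, bounding the row norms of $\boldsymbol{U}_{1}$.
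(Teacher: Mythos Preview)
Your proof is correct and follows essentially the same route as the paper's: both substitute $\boldsymbol{G}=\boldsymbol{S}\boldsymbol{\bar{P}}^{1/2}$ to turn the objective into the trace of a rank-$\Tpilots$ orthogonal projection against a Hermitian matrix, invoke the Ky Fan maximum principle to get the upper bound $\sum_{i=1}^{\Tpilots}\lambda_i(\boldsymbol{\bar{W}}\boldsymbol{\bar{P}})$, and then exploit the block-diagonal structure $\boldsymbol{\bar{W}}\boldsymbol{\bar{P}}=\mbox{blkdiag}(w_1\boldsymbol{P}_1,\ldots,w_M\boldsymbol{P}_M)$ to verify that $\sqrt{\mathcal{P}}\,\boldsymbol{U}_1^{*}$ both attains the bound and satisfies the per-column power constraint. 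The only cosmetic difference is that you introduce the symmetrized matrix $\boldsymbol{M}=\boldsymbol{\bar{P}}^{1/2}\boldsymbol{\bar{W}}\boldsymbol{\bar{P}}^{1/2}$ and check directly that $\boldsymbol{\bar{P}}^{1/2}\boldsymbol{u}_i$ are its top eigenvectors, whereas the paper uses that $\boldsymbol{\bar{W}}$ and $\boldsymbol{\bar{P}}$ commute (so $\boldsymbol{M}=\boldsymbol{\bar{W}}\boldsymbol{\bar{P}}$ and they share eigenvectors with $\boldsymbol{\bar{P}}$) to write $\boldsymbol{U}_1^{*}\boldsymbol{\bar{P}}^{-1/2}=\bar{\boldsymbol{\Sigma}}^{-1/2}\boldsymbol{U}_1^{*}$ and then picks the free invertible factor $\boldsymbol{D}=\sqrt{\mathcal{P}}\,\bar{\boldsymbol{\Sigma}}^{1/2}$; these are two phrasings of the same step.
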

	\begin{proof}
		Let $\boldsymbol{Q}=\boldsymbol{S}\boldsymbol{\bar{P}}^{\frac{1}{2}}$.
		Then the objective in (\ref{eq:S_optimization}) becomes $\mbox{tr}\left(\boldsymbol{P_{\boldsymbol{Q}}}\boldsymbol{\boldsymbol{\bar{W}}\boldsymbol{\bar{P}}}\right)$,
		where $\boldsymbol{P_{\boldsymbol{Q}}}=\boldsymbol{Q}^{*}\left(\boldsymbol{Q}\boldsymbol{Q}^{*}\right)^{-1}\boldsymbol{Q}$
		is the orthogonal projection onto $\mathcal{R}\left(\boldsymbol{Q}^{*}\right)$, the range space of $\boldsymbol{Q}^{*}$.
		Let $\boldsymbol{V}\in\mathbb{C}^{MK\times\tau}$ be a matrix with orthogonal
		columns that spans $\mathcal{R}\left(\boldsymbol{Q}^{*}\right)$. Then
		$\boldsymbol{P_{\boldsymbol{Q}}}=\boldsymbol{V}\boldsymbol{V}^{*}$ and $\boldsymbol{Q}=\boldsymbol{D}\boldsymbol{V}^{*}$
		for some invertible $\tau\times\tau$ matrix $\boldsymbol{D}$. 
		
		If we ignore the constraint (\ref{eq:S Power Constraint}), then (\ref{eq:S_optimization})
		is solved by choosing $\boldsymbol{S}=\boldsymbol{D}\boldsymbol{V}^{*}\boldsymbol{\bar{P}}^{-\frac{1}{2}}$, where $\boldsymbol{V}$ is the solution of
		\begin{eqnarray}
		& \underset{{V}}{\max} & \mbox{tr}\left(\boldsymbol{V}^{*}\boldsymbol{\boldsymbol{\bar{W}}\boldsymbol{\bar{P}}}\boldsymbol{V}\right)\label{eq:Q_optimization}\\
		& \mbox{s.t.}\;\; & \boldsymbol{V}^{*}\boldsymbol{V}=\boldsymbol{I}_{\tau}.\nonumber
		\end{eqnarray}
		This problem has a closed form solution, $\boldsymbol{V}=\boldsymbol{U}_{1}$ 
		\cite[Ch. 20.A.2]{marshall_inequalities:_????}. 
		For this choice of $\boldsymbol{V}$,\textbf{ }the objective in (\ref{eq:Q_optimization})
		becomes 
		\begin{align}
		\mbox{tr}\left(\boldsymbol{V}^{*}\boldsymbol{\boldsymbol{\bar{W}}\boldsymbol{\bar{P}}}\boldsymbol{V}\right) & =\sum_{i=1}^{\tau}\lambda_{i}\left(\boldsymbol{\bar{W}}\boldsymbol{\bar{P}}\right).\label{eq:ObjectiveOptimalValue}
		\end{align}
		
		The expression (\ref{eq:ObjectiveOptimalValue}) is an upper bound on (\ref{eq:S_optimization})
		under the power constraint (\ref{eq:S Power Constraint}).
		We now show that there is a specific choice of $\boldsymbol{D}$ such that     $\boldsymbol{S}=\boldsymbol{D}\boldsymbol{V}^{*}\boldsymbol{\bar{P}}^{-\frac{1}{2}}$ both satisfies the power constraint and achieves the upper bound, and is therefore optimal. 
		First, we note that 
		\begin{equation}
		\boldsymbol{\bar{W}}\boldsymbol{\bar{P}}=\text{blkdiag}(w_1\boldsymbol{P}_1,\cdots,w_M\boldsymbol{P}_{M}).\label{eq:PW}
		\end{equation}
		Thus, $\boldsymbol{\bar{W}}\boldsymbol{\bar{P}}$ and $\boldsymbol{\bar{P}}$ share the same eigenvectors (possibly in different order).
		It then follows that $\boldsymbol{U}_{1}^{*}\boldsymbol{\bar{P}}^{-\frac{1}{2}}=\bar{\boldsymbol{\Sigma}}^{-\frac{1}{2}}\boldsymbol{U}_{1}^{*}$,
		where $\bar{\boldsymbol{\Sigma}}$ is a $\tau\times\tau$ diagonal matrix with the eigenvalues of $\boldsymbol{\bar{P}}$
		corresponding to $\boldsymbol{U}_{1}$ on its diagonal.
		Hence, $\boldsymbol{S}=\boldsymbol{D}\bar{\boldsymbol{\Sigma}}^{-\frac{1}{2}}\boldsymbol{U}_{1}^{*}$,
		and the power constraint (\ref{eq:S Power Constraint}) can be written
		in terms of $\boldsymbol{D}$ as 
		\begin{align}
		\mbox{diag}^{\text{-}1}\left(\boldsymbol{U}_{1}\bar{\boldsymbol{\Sigma}}^{-\frac{1}{2}}\boldsymbol{D^{*}}\boldsymbol{D}\bar{\boldsymbol{\Sigma}}^{-\frac{1}{2}}\boldsymbol{U}_{1}^{*}\right) & \leq\mathcal{P}\cdot\boldsymbol{1}_{M\cdot K}.\label{eq:D_constraint}
		\end{align}
		
		Let $\boldsymbol{\boldsymbol{D}}=\sqrt{\mathcal{P}}\bar{\boldsymbol{\Sigma}}^{\frac{1}{2}}$. 
		Then, $\mbox{diag}^{\text{-}1}\left(\boldsymbol{S}^{*}\boldsymbol{S}\right)=\mathcal{P}\mbox{diag}^{\text{-}1}\left(\boldsymbol{U}_{1}\boldsymbol{U}_{1}^{*}\right)$.
		Since $\boldsymbol{U}_{1}$ are columns of a unitary matrix,
		$\mbox{diag}\left(\boldsymbol{U}_{1}\boldsymbol{U}_{1}^{*}\right)$ is a vector
		with positive elements that are smaller or equal to 1, and the constraint (\ref{eq:S Power Constraint}) is satisfied. %
		It thus follows that $\boldsymbol{S}_{opt} =\sqrt{\mathcal{P}}\boldsymbol{U}_{1}^{*}$ is an optimal solution to (\ref{eq:S_optimization}), proving the theorem.    
	\end{proof}
	
	Note that the solution $\boldsymbol{S}_{opt}$ is not unique. 
	In fact, if $\boldsymbol{S}_{opt}$ is a solution to \eqref{eq:S_optimization}, then 
	\begin{equation}
	\boldsymbol{S}=\alpha\boldsymbol{T}\boldsymbol{S}_{op    t}
	\end{equation}
	is also a solution, where $\boldsymbol{T}$ is any $\Tpilots\times\Tpilots$ invertible matrix and $\alpha$ is chosen such that $\boldsymbol{S}$ complies with the power constraint. 
	For example, one can choose $\alpha=1$ and any unitary $\boldsymbol{T}$.
	
	Insight into the optimal pilot sequences in Theorem \ref{thm:S_optimization} can be obtained when considering the special case in which $\boldsymbol{P}_{i}$, $1\leq i\leq M$ are diagonal matrices.
	\begin{corollary}
		Let $\boldsymbol{P}_{i}$, $\Ind$ be diagonal matrices with $p_{i,kk}$ the $k$-th diagonal entry of $\boldsymbol{P}_{i}$. Then, the optimal solution $\boldsymbol{S}_{opt}$ corresponds to user selection, where the $\Tpilots$ \acp{ut} with strongest link $d_{ik}=w_{i}\cdot p_{i,kk}$ are chosen and assigned orthogonal sequences, while all other \acp{ut} are nullified.
	\end{corollary}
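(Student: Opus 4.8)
The plan is to specialize the closed-form solution of Theorem~\ref{thm:S_optimization} to the diagonal case and simply read off the structure of the columns of $\boldsymbol{S}_{\rm opt}$. Recall from \eqref{eq:S} that $\boldsymbol{S}_{\rm opt}=\sqrt{\mathcal{P}}\,\boldsymbol{U}_1^{*}$, where $\boldsymbol{U}_1$ collects the $\Tpilots$ eigenvectors of $\boldsymbol{\bar{W}}\boldsymbol{\bar{P}}$ associated with its $\Tpilots$ largest eigenvalues. By \eqref{eq:PW}, $\boldsymbol{\bar{W}}\boldsymbol{\bar{P}}=\mbox{blkdiag}(w_1\boldsymbol{P}_1,\dots,w_M\boldsymbol{P}_M)$, so when every $\boldsymbol{P}_i$ is diagonal this matrix is itself diagonal, with the entry indexed by the pair $(i,k)$ equal to $d_{ik}=w_i\,p_{i,kk}$.

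First I would observe that the eigenvalues of this diagonal matrix are exactly the $MK$ numbers $\{d_{ik}\}$, and a corresponding orthonormal eigenbasis is the standard basis $\{\boldsymbol{e}_{(i,k)}\}$ of $\mathbb{C}^{MK}$, where $\boldsymbol{e}_{(i,k)}$ is the unit vector supported on coordinate $(i,k)$. Hence one may take $\boldsymbol{U}_1=[\boldsymbol{e}_{(i_1,k_1)},\dots,\boldsymbol{e}_{(i_\Tpilots,k_\Tpilots)}]$, where $(i_1,k_1),\dots,(i_\Tpilots,k_\Tpilots)$ index the $\Tpilots$ largest values among $\{d_{ik}\}$, with ties broken arbitrarily.

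Next I would compute $\boldsymbol{S}_{\rm opt}=\sqrt{\mathcal{P}}\,\boldsymbol{U}_1^{*}$ column by column. The column of $\boldsymbol{S}_{\rm opt}$ corresponding to user $k$ in cell $i$ is $\boldsymbol{s}_{ik}$, equal to $\sqrt{\mathcal{P}}$ times the conjugate of the $(i,k)$-th row of $\boldsymbol{U}_1$. If $(i,k)$ is one of the $\Tpilots$ selected pairs, say the $m$-th, then that row equals $\boldsymbol{e}_m^{T}$ in $\mathbb{C}^{\Tpilots}$, so $\boldsymbol{s}_{ik}=\sqrt{\mathcal{P}}\,\boldsymbol{e}_m$; otherwise the row vanishes and $\boldsymbol{s}_{ik}=\boldsymbol{0}$. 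Thus the $\Tpilots$ users with the largest $d_{ik}$ receive the mutually orthogonal sequences $\{\sqrt{\mathcal{P}}\,\boldsymbol{e}_m\}_{m=1}^{\Tpilots}$, each satisfying the per-user power constraint with equality, while every remaining user is assigned the zero sequence — precisely the claimed user-selection solution.

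I do not expect a genuine obstacle here, since the statement is an immediate corollary of Theorem~\ref{thm:S_optimization}. The only points that require care are the index bookkeeping when translating the column structure of $\boldsymbol{U}_1^{*}$ into statements about the individual pilot vectors $\boldsymbol{s}_{ik}$, and the harmless non-uniqueness of $\boldsymbol{U}_1$ when some of the $d_{ik}$ coincide; this, together with the non-uniqueness already noted after Theorem~\ref{thm:S_optimization}, means the corollary exhibits one optimal solution rather than a unique one.
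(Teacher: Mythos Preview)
Your proposal is correct and follows essentially the same approach as the paper: both specialize Theorem~\ref{thm:S_optimization} by noting that when the $\boldsymbol{P}_i$ are diagonal, $\boldsymbol{\bar W}\boldsymbol{\bar P}$ is diagonal with entries $d_{ik}=w_i p_{i,kk}$, so the columns of $\boldsymbol{U}_1$ are standard unit vectors and $\boldsymbol{S}_{\rm opt}=\sqrt{\mathcal P}\,\boldsymbol{U}_1^{*}$ reduces to user selection. Your version simply spells out the column-by-column bookkeeping and the tie-breaking/non-uniqueness more explicitly than the paper does.
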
 
	\begin{proof}
		In this case, $\boldsymbol{\bar{W}}\boldsymbol{\bar{P}}$ is a diagonal matrix with diagonal elements $d_{ik}=w_{i}\cdot p_{i,kk}$, $1\leq i\leq M$, $1\leq k\leq K$.
		Furthermore, the columns of $\boldsymbol{U}_{1}$ are unit vectors.
		Therefore, although our problem is not a user scheduling problem but pilot design, the resulting $\boldsymbol{S}_{\rm opt}$ corresponds to a user selection solution.
		For each \ac{ut} $k$ in cell $i$, we calculate its corresponding $d_{ik}$, and choose the $\Tpilots$ \acp{ut} with largest $d_{ik}$
		values, and the remaining $MK-\Tpilots$ \acp{ut} are nullified. 
	\end{proof}
	We emphasize that in contrast to classical user scheduling algorithms, e.g. \cite{gesbert_shifting_2007,shirani-mehr_joint_2011}, here the \acp{ut} are selected according to their long-term statistics regardless of the specific channel realization. 
	
	This specific case demonstrates some of the main differences between our design framework and the pilot allocation approach, e.g., \cite{yin_coordinated_2013,chen_pilot_2016,su_fractional_2015,yan_pilot_2015,zhu_smart_2015,fernandes_inter-cell_2013}. In allocation solutions, all the $MK$ \acp{ut} are necessarily transmitting, and the optimization is over how to choose the users that will be assigned the same pilot sequence. In contrast, in our solution, some \acp{ut} may be nullified. 
	The main disadvantage of this approach is its fairness across different \acp{ut}. 
	If one wishes to promote fairness in the system, alternative objectives than the sum in (\ref{eq:GeneralOptimization}), for example max-min, may be preferable \cite{kwan_proportional_2009,viswanath_opportunistic_2002,huh_multi-cell_2011,abhyankar_min-max_2007}. Nevertheless, if aiming at minimizing the overall \ac{mse} in the system, the eigen-pilots of (\ref{eq:S}) are optimal (for the fully-separable case).
	
	In the simulations in Section~\ref{sec:Esperiments}, we compare our eigen-pilots solution with other pilot design and allocation methods and show that it results in lower sum-\acp{mse} than other algorithms.
	
	
	\subsection{Partially Separable Correlations}
	\label{sec:PartiallySep}
	We now consider the scenario of partially-separable correlations, in which the channel correlation matrices satisfy \eqref{eq:PartiallSeparable}. 
	In this case, $\boldsymbol{P}_{ij}$ depends both on the transmitter index $j$ and the receiver index $i$, hence, instead of \eqref{eq:TransmitMatrix}, 
	we have the  $MK\times MK$  block diagonal matrix 
	\begin{equation}
	\boldsymbol{\bar{P}}_{i}=\mbox{blkdiag}\left(\boldsymbol{P}_{i1},\cdots,\boldsymbol{P}_{iM}\right).
	\end{equation}
	Let $\boldsymbol{Z}_i$ be an $M \times M$ matrix whose entries are zero except for the $i$-th diagonal entry, which equals one, and set $\boldsymbol{L}_i \triangleq \boldsymbol{Z}_i \otimes \boldsymbol{I}_{K}$.
	Then, $\boldsymbol{\bar{P}}_{i}\boldsymbol{L}_{i}$ is a block-diagonal matrix whose entries are zero except for the $i$-th block, which equals $\boldsymbol{P}_{i,i}$.
	Using these notations, and recalling the definition of $\boldsymbol{S}$  in \eqref{eq:PilotsMatrix}, we can rewrite \eqref{eq:GeneralOptimization3} as 
	\begin{eqnarray}
	\label{eq:SumOfRatiosPilots}
	&\underset{_{{\boldsymbol{S}}}}{\max} & \sum_{i}w_i \cdot \mbox{tr}\left(\boldsymbol{S}\boldsymbol{\bar{P}}_{i}^2\boldsymbol{L}_{i}\boldsymbol{S}^{*}\left[\boldsymbol{S}\boldsymbol{\bar{P}}_{i}\boldsymbol{S}^{*}\right]^{-1}\right)\\
	&\mbox{s.t.}\;\; & \mbox{diag}^{\text{-}1}\left(\boldsymbol{S}^{*}\boldsymbol{S}\right)\leq\mathcal{P}\cdot\boldsymbol{1}.\notag
	\end{eqnarray}
	
	Note that different from \eqref{eq:S_optimization}, here we still have the sum over the receiver index $i$. For $\boldsymbol{P}_{ij}=\boldsymbol{P}_{j}$ as in the fully-separable case, we get $\boldsymbol{\bar{P}}_{i}=\boldsymbol{\bar{P}}$, and \eqref{eq:SumOfRatiosPilots} coincides with \eqref{eq:S_optimization}.

	While, to the best of our knowledge, there is no known solution for \eqref{eq:SumOfRatiosPilots} in general, there are some special cases for which a solution is available. 
	For a single cell network, $M=1$, \eqref{eq:SumOfRatiosPilots} becomes the ratio-trace problem as in \eqref{eq:S_optimization}, and can be solved using the same method. 
	Another interesting scenario is when the number of pilot symbols is fixed to $\Tpilots = 1$. Although this setting is unlikely in massive \ac{mimo} systems, in which $\Tpilots$ is typically larger than $K$, we use this special case to illustrate the complexity of the considered problem. For a single pilot symbol the pilot symbols matrix $\boldsymbol{S}$ is an $1 \times MK$ vector $\boldsymbol{s}$, and \eqref{eq:SumOfRatiosPilots} becomes
	\begin{eqnarray}
	\label{eq:SumOfQuadRatios}
	&\underset{{\boldsymbol{s}}}{\max} & \sum_{i}w_i\frac{\boldsymbol{s}\boldsymbol{\bar{P}}_{i}^2\boldsymbol{L}_{i}\boldsymbol{s}^{*}}{\boldsymbol{s}\boldsymbol{\bar{P}}_{i}\boldsymbol{s}^{*}}\\
	&\mbox{s.t.} \;\;& s_{l}^{*}s_{l}\leq\mathcal{P},l=1\cdots MK.\notag
	\end{eqnarray}
	Problem \eqref{eq:SumOfQuadRatios} is known as the sum of quadratic ratios maximization, and has been addressed in previous works via branch and bound \cite{shen_solving_2009,shen_maximizing_2013} and harmony search \cite{jaberipour_solving_2010} algorithms. 
	However, the large dimension of $\boldsymbol{s}$ in (\ref{eq:SumOfQuadRatios}), that is equal to the total number of users in the system $MK$,  and the singularity in the origin makes these (and other similar) methods difficult to apply, even when $\Tpilots=1$. 
	
	In the following 
	we propose a greedy method, referred to as \ac{GSRTM}, for maximizing  \eqref{eq:SumOfRatiosPilots}, where we begin by considering pilot sequences of length $\tau=1$, and at each step we expand the sequences with one additional symbol. We choose the optimal symbol to add to each \ac{ut}'s sequence,
	by solving a vector problem similar to \eqref{eq:SumOfQuadRatios}. 
	The suggested \ac{GSRTM} method extends the \ac{GRTM} algorithm proposed in \cite{ioushua_eldar_hybrid_2017} to the case of sum of ratios trace as in \eqref{eq:SumOfRatiosPilots}. 
	
	In the simulation section, we compare the performance of GSRTM with other pilot design and allocation methods, and demonstrate that it results in lower sum-\acp{mse} than the other methods in the partially-correlated case.
	
	\subsection{GSRTM - Greedy Sum of Ratio Traces Maximization} 
	In order to formulate the \ac{GSRTM} algorithm, we first note that due to the quadratic form of the objective in \eqref{eq:SumOfRatiosPilots}, it is not sensitive to a scale in $\boldsymbol{S}$. That is, the objective value corresponding to a given $\boldsymbol{S}$ is identical to the value that corresponds to $\alpha\boldsymbol{S}$, for any scalar $\alpha\neq 0$.
	Therefore, we can first solve \eqref{eq:SumOfRatiosPilots} without the power constraint, and then scale the solution to comply with it.
	
	Assume we have a solution $\Sk{N}\in\mathbb{C}^{N\times MK}$ for the $N$-sized problem, i.e. (\ref{eq:SumOfRatiosPilots}) with $\tau=N$. 
	We now wish to add an additional pilot symbol, 
	and compute the optimal row vector $\boldsymbol{s}\in\mathbb{C}^{1\times MK}$ to add to the previously selected sequences such that $\Sk{N+1}=[\Sk{N}^T\;\boldsymbol{s}^{T}]^{T}\in\mathbb{C}^{N+1\times MK}$. 
	
	First, note that in order for the objective in (\ref{eq:SumOfRatiosPilots}) to be well defined in the $\left(N+1\right)$-sized case, we require $\Sk{N+1}\boldsymbol{\bar{P}}_{i}\Sk{N+1}^*$ to be invertible for all $1\leq i\leq M$, namely, that $\boldsymbol{s}\notin\mathcal{R}(\Sk{N})$. 
	In practice, this condition implies that the new additional pilot symbols vector $\boldsymbol{s}$ contributes to the orthogonality between pilot sequences of different users.
	
	Given this condition, the $\left(N+1\right)$-sized optimization problem is given by
	\begin{equation}
	\begin{aligned}
	&\underset{s}{\max} & \sum_{i}w_i\mbox{tr}\Big(\Sk{N+1}&\boldsymbol{A}_{i}\Sk{N+1}^*\left[\Sk{N+1}\boldsymbol{B}_{i}\Sk{N+1}^*\right]^{-1}\Big) \label{eq:KOpt}\\
	&\mbox{s.t.} &\boldsymbol{s}\notin\mathcal{R}(\Sk{N}),\;\;\;\;\;& 
	\end{aligned}
	\end{equation}
	with $\boldsymbol{A}_i\triangleq\boldsymbol{\bar{P}}_{i}^2\boldsymbol{L}_{i}$ and $\boldsymbol{B}_i\triangleq\boldsymbol{\bar{P}}_{i}$.
	
	In the next proposition we show that for a specific choice of  $MK\times MK$ \ac{psd} matrices  $\{\boldsymbol{G}_{i}\}_{i=1}^M$ and $\{\boldsymbol{T}_{i}\}_{i=1}^M$, solving \eqref{eq:KOpt} is equivalent to solving the following sum of quadratic ratios problem: 
	\begin{eqnarray}
	\label{eq:VectorProblem}
	&\underset{\boldsymbol{s}}{\max} & \sum_{i}w_i\frac{\boldsymbol{s}\boldsymbol{G}_{i}\boldsymbol{s}^{*}}{\boldsymbol{s}\boldsymbol{T}_{i}\boldsymbol{s}^{*}}\\
	&\mbox{s.t.}\;\; & \boldsymbol{s}\boldsymbol{T}_{i}\boldsymbol{s}^{*}>0,\quad 1\leq i\leq M.
	\notag
	\end{eqnarray}
	Proposition \ref{prop:GRTM} leads to the formulation of the GSRTM algorithm for approaching the solution of \eqref{eq:SumOfRatiosPilots}, where in each iteration we choose the best row vector to add to the previously selected pilot matrix rows by solving \eqref{eq:VectorProblem}. 
	Once all $\Tpilots$ rows of $\boldsymbol{S}$ are chosen, the entire matrix is scaled to comply with the power constraint.
	The GSRTM algorithm is summarized in Algorithm \ref{algo:Greedy}.
	
	\begin{prop}\label{prop:GRTM}
		Problems (\ref{eq:VectorProblem}) and  (\ref{eq:KOpt}) are equivalent with
		\begin{equation*}
		\begin{aligned}
		&    \boldsymbol{T}_{i}=\boldsymbol{B}_{i}^{\frac{1}{2}}\left(\boldsymbol{I}_{MK}-\boldsymbol{P}_{\boldsymbol{B}_{i}^{\frac{1}{2}}\Sk{k}^*}\right)\boldsymbol{B}_{i}^{\frac{1}{2}}\\
		&\gamma_{i}=\text{tr}\left(\boldsymbol{P}_{\boldsymbol{B}_{i}^{\frac{1}{2}}\Sk{k}^*}\boldsymbol{B}_{i}^{-\frac{1}{2}}\boldsymbol{A}_{i}^{\frac{1}{2}}\boldsymbol{B}_{i}^{-\frac{1}{2}}\right)\\ 
		&\boldsymbol{X}_{i}=\boldsymbol{B}_{i}^{\frac{1}{2}}\boldsymbol{P}_{\boldsymbol{B}_{i}^{\frac{1}{2}}\Sk{k}^*} \boldsymbol{B}_{i}^{-\frac{1}{2}}\boldsymbol{A}_{i}^{\frac{1}{2}}-\boldsymbol{A}_{i}^{\frac{1}{2}}\\
		&\boldsymbol{G}_{i}=\gamma_{i}\boldsymbol{T}_{i}+\boldsymbol{X}_{i}\boldsymbol{X}_{i}^{*} .
		\end{aligned}
		\end{equation*}
	\end{prop}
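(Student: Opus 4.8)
The plan is to treat the sum over $i$ in \eqref{eq:KOpt} and \eqref{eq:VectorProblem} term by term and to show that, on the common feasible set, the $i$th summand of \eqref{eq:KOpt} and the $i$th summand of \eqref{eq:VectorProblem} differ by a quantity that does not depend on $\boldsymbol{s}$; summing with the weights $w_i$ then proves the equivalence, since the two problems are maximizing objectives that differ by an additive constant over the same feasible region.

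First I would pass to whitened coordinates. Since $\boldsymbol{B}_i=\boldsymbol{\bar{P}}_i$ is Hermitian and invertible, hence positive definite, set $\boldsymbol{M}\triangleq\Sk{N+1}\boldsymbol{B}_i^{\frac12}$ and $\boldsymbol{R}_i\triangleq\boldsymbol{B}_i^{-\frac12}\boldsymbol{A}_i\boldsymbol{B}_i^{-\frac12}$ with $\boldsymbol{A}_i=\boldsymbol{\bar{P}}_i^2\boldsymbol{L}_i$. The matrices $\boldsymbol{L}_i$ and $\boldsymbol{\bar{P}}_i$ are both block-diagonal over the same $M$ blocks of size $K$, so they commute and $\boldsymbol{A}_i=(\boldsymbol{\bar{P}}_i\boldsymbol{L}_i)(\boldsymbol{\bar{P}}_i\boldsymbol{L}_i)^{*}$ is Hermitian positive semidefinite, which makes $\boldsymbol{A}_i^{\frac12}$ (appearing in $\boldsymbol{X}_i$ and $\gamma_i$) well defined. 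Then $\Sk{N+1}\boldsymbol{B}_i\Sk{N+1}^{*}=\boldsymbol{M}\boldsymbol{M}^{*}$ and $\Sk{N+1}\boldsymbol{A}_i\Sk{N+1}^{*}=\boldsymbol{M}\boldsymbol{R}_i\boldsymbol{M}^{*}$, and by the cyclic property of the trace the $i$th summand of \eqref{eq:KOpt} equals $\mbox{tr}(\boldsymbol{R}_i\boldsymbol{P}_{\boldsymbol{M}^{*}})$, where $\boldsymbol{P}_{\boldsymbol{M}^{*}}$ is the orthogonal projection onto the row space $\mathcal{R}(\boldsymbol{M}^{*})$ of $\boldsymbol{M}$; invertibility of $\Sk{N+1}\boldsymbol{B}_i\Sk{N+1}^{*}$ is exactly the requirement $\boldsymbol{s}\notin\mathcal{R}(\Sk{N})$, assuming inductively that $\Sk{N}$ has full row rank.

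Next I would perform a rank-one update of that projection. The rows of $\boldsymbol{M}$ are those of $\Sk{N}\boldsymbol{B}_i^{\frac12}$ together with $\boldsymbol{s}\boldsymbol{B}_i^{\frac12}$, so $\mathcal{R}(\boldsymbol{M}^{*})$ is spanned by $\mathcal{R}(\boldsymbol{B}_i^{\frac12}\Sk{N}^{*})$ and $\boldsymbol{B}_i^{\frac12}\boldsymbol{s}^{*}$. Writing $\boldsymbol{P}_i\triangleq\boldsymbol{P}_{\boldsymbol{B}_i^{\frac12}\Sk{N}^{*}}$ and $\boldsymbol{n}\triangleq(\boldsymbol{I}_{MK}-\boldsymbol{P}_i)\boldsymbol{B}_i^{\frac12}\boldsymbol{s}^{*}$, the condition $\boldsymbol{s}\notin\mathcal{R}(\Sk{N})$ forces $\boldsymbol{B}_i^{\frac12}\boldsymbol{s}^{*}\notin\mathcal{R}(\boldsymbol{B}_i^{\frac12}\Sk{N}^{*})$ (since $\boldsymbol{B}_i^{\frac12}$ is invertible), hence $\boldsymbol{n}\neq\boldsymbol{0}$ and $\boldsymbol{P}_{\boldsymbol{M}^{*}}=\boldsymbol{P}_i+\boldsymbol{n}\boldsymbol{n}^{*}/(\boldsymbol{n}^{*}\boldsymbol{n})$. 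Substituting, the $i$th summand of \eqref{eq:KOpt} becomes $\mbox{tr}(\boldsymbol{R}_i\boldsymbol{P}_i)+\boldsymbol{n}^{*}\boldsymbol{R}_i\boldsymbol{n}/(\boldsymbol{n}^{*}\boldsymbol{n})$, with $\mbox{tr}(\boldsymbol{R}_i\boldsymbol{P}_i)$ independent of $\boldsymbol{s}$. Because $\boldsymbol{I}_{MK}-\boldsymbol{P}_i$ is a Hermitian idempotent, $\boldsymbol{n}^{*}\boldsymbol{n}=\boldsymbol{s}\boldsymbol{B}_i^{\frac12}(\boldsymbol{I}_{MK}-\boldsymbol{P}_i)\boldsymbol{B}_i^{\frac12}\boldsymbol{s}^{*}=\boldsymbol{s}\boldsymbol{T}_i\boldsymbol{s}^{*}$, which is the denominator in \eqref{eq:VectorProblem}; this also shows $\boldsymbol{s}\boldsymbol{T}_i\boldsymbol{s}^{*}=\|\boldsymbol{n}\|_2^{2}>0$ for every $i$ exactly when $\boldsymbol{s}\notin\mathcal{R}(\Sk{N})$, so the feasible sets of \eqref{eq:KOpt} and \eqref{eq:VectorProblem} coincide.

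It then remains to identify the numerator $\boldsymbol{n}^{*}\boldsymbol{R}_i\boldsymbol{n}$ with $\boldsymbol{s}\boldsymbol{X}_i\boldsymbol{X}_i^{*}\boldsymbol{s}^{*}$. Using $\boldsymbol{B}_i^{\frac12}(\boldsymbol{I}_{MK}-\boldsymbol{P}_i)\boldsymbol{B}_i^{-\frac12}=\boldsymbol{I}_{MK}-\boldsymbol{B}_i^{\frac12}\boldsymbol{P}_i\boldsymbol{B}_i^{-\frac12}$, the definition of $\boldsymbol{X}_i$ rearranges to $\boldsymbol{X}_i=-\boldsymbol{B}_i^{\frac12}(\boldsymbol{I}_{MK}-\boldsymbol{P}_i)\boldsymbol{B}_i^{-\frac12}\boldsymbol{A}_i^{\frac12}$; taking the conjugate transpose (all of $\boldsymbol{B}_i^{\frac12},\boldsymbol{A}_i^{\frac12},\boldsymbol{P}_i$ are Hermitian) and multiplying out with $\boldsymbol{A}_i^{\frac12}\boldsymbol{A}_i^{\frac12}=\boldsymbol{A}_i$ gives $\boldsymbol{X}_i\boldsymbol{X}_i^{*}=\boldsymbol{B}_i^{\frac12}(\boldsymbol{I}_{MK}-\boldsymbol{P}_i)\boldsymbol{R}_i(\boldsymbol{I}_{MK}-\boldsymbol{P}_i)\boldsymbol{B}_i^{\frac12}$, hence $\boldsymbol{s}\boldsymbol{X}_i\boldsymbol{X}_i^{*}\boldsymbol{s}^{*}=\boldsymbol{n}^{*}\boldsymbol{R}_i\boldsymbol{n}$. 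Therefore the $i$th summand of \eqref{eq:KOpt} equals $\mbox{tr}(\boldsymbol{R}_i\boldsymbol{P}_i)+\boldsymbol{s}\boldsymbol{X}_i\boldsymbol{X}_i^{*}\boldsymbol{s}^{*}/(\boldsymbol{s}\boldsymbol{T}_i\boldsymbol{s}^{*})$, whereas, since $\boldsymbol{G}_i=\gamma_i\boldsymbol{T}_i+\boldsymbol{X}_i\boldsymbol{X}_i^{*}$, the $i$th summand of \eqref{eq:VectorProblem} equals $\gamma_i+\boldsymbol{s}\boldsymbol{X}_i\boldsymbol{X}_i^{*}\boldsymbol{s}^{*}/(\boldsymbol{s}\boldsymbol{T}_i\boldsymbol{s}^{*})$; the two differ by $\mbox{tr}(\boldsymbol{R}_i\boldsymbol{P}_i)-\gamma_i$, which is $\boldsymbol{s}$-independent, and summing over $i$ with weights $w_i$ finishes the argument. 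Only finiteness and $\boldsymbol{s}$-independence of $\gamma_i$ are used for the equivalence; the stated value is a convenient nonnegative choice, which together with $\boldsymbol{T}_i\succeq\boldsymbol{0}$ and $\boldsymbol{X}_i\boldsymbol{X}_i^{*}\succeq\boldsymbol{0}$ makes $\boldsymbol{G}_i$ positive semidefinite so that \eqref{eq:VectorProblem} is a genuine sum of quadratic ratios. The only step that needs real care is the identity $\boldsymbol{X}_i\boldsymbol{X}_i^{*}=\boldsymbol{B}_i^{\frac12}(\boldsymbol{I}_{MK}-\boldsymbol{P}_i)\boldsymbol{R}_i(\boldsymbol{I}_{MK}-\boldsymbol{P}_i)\boldsymbol{B}_i^{\frac12}$, i.e. correctly tracking Hermitian adjoints and which factors are idempotent; everything else is a direct consequence of the rank-one update of $\boldsymbol{P}_{\boldsymbol{M}^{*}}$.
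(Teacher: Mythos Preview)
Your argument is correct and follows essentially the same route as the paper: both hinge on a rank-one update, the paper invoking the block-inverse formula of Lemma~\ref{thm:rank1updtae} applied to $\Sk{N+1}\boldsymbol{B}_i\Sk{N+1}^{*}$, while you first whiten by $\boldsymbol{B}_i^{1/2}$ and then update the orthogonal projection $\boldsymbol{P}_{\boldsymbol{M}^{*}}$ directly. These are two expressions of the same linear-algebraic fact, and your projection formulation makes the computation of the numerator $\boldsymbol{s}\boldsymbol{X}_i\boldsymbol{X}_i^{*}\boldsymbol{s}^{*}$ and denominator $\boldsymbol{s}\boldsymbol{T}_i\boldsymbol{s}^{*}$ especially transparent.

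One point worth noting: the paper asserts \emph{equality} of the $i$th summands, whereas you show they agree up to the $\boldsymbol{s}$-independent shift $\mbox{tr}(\boldsymbol{R}_i\boldsymbol{P}_i)-\gamma_i$ and correctly observe that this suffices for equivalence of the optimization problems. Your version is in fact more careful here: with the paper's stated $\gamma_i=\mbox{tr}\big(\boldsymbol{P}_i\boldsymbol{B}_i^{-1/2}\boldsymbol{A}_i^{1/2}\boldsymbol{B}_i^{-1/2}\big)$ the summands are not literally equal (one would need $\boldsymbol{A}_i$ rather than $\boldsymbol{A}_i^{1/2}$ for that), so treating $\gamma_i$ as an arbitrary constant, as you do, is the cleaner way to state the result. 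Your handling of the feasible-set equivalence via $\boldsymbol{s}\boldsymbol{T}_i\boldsymbol{s}^{*}=\|\boldsymbol{n}\|_2^{2}$ is also equivalent to, and slightly more direct than, the paper's two-direction argument.
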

	We note that the proof of Proposition~\ref{prop:GRTM} detailed in the following is similar to the proof of \cite[Prop. 3]{ioushua_eldar_hybrid_2017}. However, here the proof takes into account a sum of ratio traces, rather than a single ratio trace, and considers a different constraint on $\boldsymbol{s}$.
	\begin{proof}
		To prove the proposition, we rely on the following lemma: 
		\begin{lem}\cite[Ch. 3]{Petersen06thematrix}\label{thm:rank1updtae}
			Let $\tilde{\boldsymbol{S}}=\left[\boldsymbol{S}^T\;\boldsymbol{s}^T\right]^T$ and denote $\boldsymbol{Q}=\left(\boldsymbol{S}\boldsymbol{S}^{*}\right)^{-1}$. Then
			\begin{equation*}
			\left(\tilde{\boldsymbol{S}}\tilde{\boldsymbol{S}^{*}}\right)^{-1}=\left[
			\begin{array}{cc}
			\boldsymbol{Q}+\alpha\boldsymbol{Q}\boldsymbol{S}\boldsymbol{s}^{*}\boldsymbol{s}\boldsymbol{S}^{*}\boldsymbol{Q}^{*} & -\alpha\boldsymbol{Q}\boldsymbol{S}\boldsymbol{s}^{*}\\
			-\alpha\boldsymbol{s}\boldsymbol{S}^{*}\boldsymbol{Q}^{*} & \alpha
			\end{array}
			\right],
			\end{equation*}
			with $\alpha=\frac{1}{\boldsymbol{s}\boldsymbol{s}^{*}-\boldsymbol{s}\boldsymbol{S}^{*}\boldsymbol{Q}\boldsymbol{S}\boldsymbol{s}^{*}}$.
		\end{lem}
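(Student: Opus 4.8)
The plan is to recognize $\tilde{\boldsymbol{S}}\tilde{\boldsymbol{S}}^{*}$ as a bordered (two-by-two block) Gram matrix and apply the standard Schur-complement block-inversion formula, specialized to the case where the new border consists of a single row and column. First I would expand the Gram matrix using the partition $\tilde{\boldsymbol{S}}=[\boldsymbol{S}^{T}\;\boldsymbol{s}^{T}]^{T}$,
\begin{equation*}
\tilde{\boldsymbol{S}}\tilde{\boldsymbol{S}}^{*}=\begin{pmatrix}\boldsymbol{S}\boldsymbol{S}^{*} & \boldsymbol{S}\boldsymbol{s}^{*}\\ \boldsymbol{s}\boldsymbol{S}^{*} & \boldsymbol{s}\boldsymbol{s}^{*}\end{pmatrix},
\end{equation*}
noting that the top-left block $\boldsymbol{S}\boldsymbol{S}^{*}$ is invertible with inverse $\boldsymbol{Q}$, while the bottom-right block $\boldsymbol{s}\boldsymbol{s}^{*}$ is a scalar. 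The off-diagonal blocks are the conjugate pair $\boldsymbol{S}\boldsymbol{s}^{*}$ and $\boldsymbol{s}\boldsymbol{S}^{*}$.

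Next I would compute the Schur complement of the top-left block, $\sigma=\boldsymbol{s}\boldsymbol{s}^{*}-\boldsymbol{s}\boldsymbol{S}^{*}\boldsymbol{Q}\boldsymbol{S}\boldsymbol{s}^{*}$, which is a scalar and is exactly $\alpha^{-1}$. Substituting $A=\boldsymbol{S}\boldsymbol{S}^{*}$, $B=\boldsymbol{S}\boldsymbol{s}^{*}$, $C=\boldsymbol{s}\boldsymbol{S}^{*}$, $D=\boldsymbol{s}\boldsymbol{s}^{*}$ into the four-block inversion identity
\begin{equation*}
\begin{pmatrix}A & B\\ C & D\end{pmatrix}^{-1}=\begin{pmatrix}A^{-1}+A^{-1}B\,\sigma^{-1}CA^{-1} & -A^{-1}B\,\sigma^{-1}\\ -\sigma^{-1}CA^{-1} & \sigma^{-1}\end{pmatrix},\quad \sigma=D-CA^{-1}B,
\end{equation*}
and using $A^{-1}=\boldsymbol{Q}$ together with $\sigma^{-1}=\alpha$, reproduces the four blocks claimed in the lemma. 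The final cosmetic step is to observe that $\boldsymbol{S}\boldsymbol{S}^{*}$ is Hermitian, so $\boldsymbol{Q}^{*}=\boldsymbol{Q}$, which makes the $\boldsymbol{Q}^{*}$ entries in the stated off-diagonal and top-left blocks agree with the $\boldsymbol{Q}$ produced by the formula (and $\alpha$, being scalar, commutes past every factor).

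Since the statement is an algebraic identity, I could alternatively avoid citing the block formula and verify it directly by multiplying the claimed right-hand side on the right by $\tilde{\boldsymbol{S}}\tilde{\boldsymbol{S}}^{*}$ and checking blockwise that the product equals $\boldsymbol{I}_{N+1}$; this collapses to elementary cancellations once $\alpha^{-1}=\boldsymbol{s}\boldsymbol{s}^{*}-\boldsymbol{s}\boldsymbol{S}^{*}\boldsymbol{Q}\boldsymbol{S}\boldsymbol{s}^{*}$ is factored out of the relevant terms. There is no genuine conceptual obstacle here; the only point requiring care is well-definedness, namely that $\alpha$ is finite. This holds exactly when $\sigma=\boldsymbol{s}\left(\boldsymbol{I}_{MK}-\boldsymbol{S}^{*}\boldsymbol{Q}\boldsymbol{S}\right)\boldsymbol{s}^{*}\neq 0$; recognizing $\boldsymbol{S}^{*}\boldsymbol{Q}\boldsymbol{S}$ as the orthogonal projector onto $\mathcal{R}(\boldsymbol{S}^{*})$, this scalar equals $\bigl\Vert\left(\boldsymbol{I}_{MK}-\boldsymbol{S}^{*}\boldsymbol{Q}\boldsymbol{S}\right)\boldsymbol{s}^{*}\bigr\Vert^{2}$, which is strictly positive precisely when $\boldsymbol{s}$ is not in the row space of $\boldsymbol{S}$. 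This is exactly the condition $\boldsymbol{s}\notin\mathcal{R}(\Sk{N})$ already imposed when $\Sk{N+1}$ is constructed, so the lemma is well posed under the hypotheses in which it is invoked.
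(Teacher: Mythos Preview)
Your proposal is correct and is precisely the standard Schur-complement derivation; the paper does not supply its own proof but simply cites the result from the Matrix Cookbook, so your argument is exactly what the cited reference would contain. Your additional remark linking the invertibility condition $\sigma\neq 0$ to $\boldsymbol{s}\notin\mathcal{R}(\Sk{N})$ is a nice touch that the paper establishes separately in the surrounding proof of Proposition~\ref{prop:GRTM}.
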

		Using Lemma~\ref{thm:rank1updtae} and straightforward algebraic operations we get equality between the $i$-th summand of the sums in \eqref{eq:KOpt} and in \eqref{eq:VectorProblem}. Consequently,  these two objectives are equal, and it remains to show that the constraints are equivalent to prove the proposition.
		
		Since $\boldsymbol{B}_{i}$, $1\leq i\leq M$, are invertible, $\boldsymbol{s}^{*}\in\mathcal{R}(\Sk{N}^*)$ if and only if $\boldsymbol{B}_{i}^{\frac{1}{2}}\boldsymbol{s}^{*}\in\mathcal{R}(\boldsymbol{B}_{i}^{\frac{1}{2}}\Sk{N}^* )$, $1\leq i\leq M$. 
		Therefore, if $\boldsymbol{s}^{*}\in\mathcal{R}(\Sk{N}^*)$, then
		\begin{align*}
		\boldsymbol{s}^{*}\boldsymbol{T}_{i}\boldsymbol{s}^{*}
		&=\boldsymbol{s}\boldsymbol{B}_{i}\boldsymbol{s}-\boldsymbol{s}^{*}\boldsymbol{B}_{i}^{\frac{1}{2}} \boldsymbol{P}_{\boldsymbol{B}_{i}^{\frac{1}{2}}\Sk{N}^*}\boldsymbol{B}_{i}^{\frac{1}{2}}\boldsymbol{s}\\
		&=\boldsymbol{s}^{*}\boldsymbol{B}_{i}\boldsymbol{s}^{*}-\boldsymbol{s}\boldsymbol{B}_{i}^{\frac{1}{2}} \boldsymbol{B}_{i}^{\frac{1}{2}}\boldsymbol{s}^{*}=0,\quad \forall    1\leq i\leq M.    
		\end{align*}
		In the other direction, note that $\boldsymbol{T}_{i}$ is non-negative Hermitian matrix, and hence can be decomposed as $\boldsymbol{T}_{i}=\boldsymbol{Q}_{i}\boldsymbol{Q}_{i}^{*}$ for some $\boldsymbol{Q}_{i}$. Thus, $\boldsymbol{s}\boldsymbol{T}_{i}\boldsymbol{s}=0$ if and only if $\boldsymbol{Q}_{i}^{*}\boldsymbol{s}=\boldsymbol{0}$. Multiplying both sides of the equation by $\boldsymbol{Q}_{i}$ yields
		\begin{equation*}
		\boldsymbol{T}_{i}\boldsymbol{s}^{*}=\boldsymbol{B}_{i}^{\frac{1}{2}}\Big(\boldsymbol{I}_{MK}-\boldsymbol{P}_{\boldsymbol{B}_{i}^{\frac{1}{2}}\Sk{N}^*}\Big)\boldsymbol{B}_{i}^{\frac{1}{2}}\boldsymbol{s}^{*}=\boldsymbol{0}.
		\end{equation*}
		Since $\boldsymbol{B}_{i}$ is invertible, it follows that \\
		$(\boldsymbol{I}_{MK}-\boldsymbol{P}_{\boldsymbol{B}_{i}^{\frac{1}{2}}\Sk{N}^*})\boldsymbol{B}_{i}^{\frac{1}{2}}\boldsymbol{s}^{*}=\boldsymbol{0}$,
		or, $\boldsymbol{B}_{i}^{\frac{1}{2}}\boldsymbol{s}^{*}\in\mathcal{R}(\boldsymbol{B}_{i}^{\frac{1}{2}}\Sk{N}^*)$, which is equivalent to $\boldsymbol{s}^{*}\in\mathcal{R}\big(\Sk{N}^*\big)$.  
		Hence, we proved that
		\begin{equation}
		\label{eqn:tempApp1}
		\boldsymbol{s}^{*}\in\mathcal{R}\left(\Sk{K}^*\right)\iff\boldsymbol{s}\boldsymbol{T}_{i}\boldsymbol{s}^{*}=0,\;\forall 1\leq i\leq M.
		\end{equation}
		Since $\boldsymbol{T}_{i}$ is non-negative definite, $\boldsymbol{s}\boldsymbol{T}_{i}\boldsymbol{s}^{*}\geq0$, for all $\boldsymbol{s}$. 
		The previous connection then yields
		\begin{equation*}
		\boldsymbol{s}^{*}\notin\mathcal{R}\left(\Sk{K}^*\right)\iff\boldsymbol{s}\boldsymbol{T}_{i}\boldsymbol{s}^{*}>0,\;\forall 1\leq i\leq M.
		\end{equation*}     
		As there is equality between the objectives and feasible sets of (\ref{eq:VectorProblem}) and  (\ref{eq:KOpt}), both problems are equivalent.  
	\end{proof}
	\smallskip
	Our remaining task is now to solve the sum of quadratic rations problem  in \eqref{eq:VectorProblem}. As mentioned before, a possible approach can be to use one of the algorithms \cite{shen_solving_2009,shen_maximizing_2013,jaberipour_solving_2010}, but due to the large dimension of $\boldsymbol{s}$, these will result in very large runtime.
	
	One simple method for obtaining a (sub-optimal) solution for \eqref{eq:VectorProblem} is searching over a selected dictionary, and calculating the objective value in \eqref{eq:VectorProblem} for each vector in the dictionary. 
	The vector that corresponds to the largest objective value is then chosen. 
	The computational load of this approach is reduced both compared to \cite{shen_solving_2009,shen_maximizing_2013,jaberipour_solving_2010}, and to applying similar dictionary solutions directly on (\ref{eq:KOpt}), since here each element in the sum is a scalar ratio rather than a matrix ratio. 
	The choice of dictionary is flexible. In the simulations study detailed in Section \ref{sec:Esperiments}, we compare different possible such choices. 
	
	\begin{algorithm} 
		\caption{GSRTM}
		\text{\textbf{Input} correlation matrices $\bar{\boldsymbol{P}}_{i}$, $1\leq i\leq M$}\\
		\text{\textbf{Output} pilot matrix $\boldsymbol{S}$}\\
		\text{\textbf{Initialize} $\boldsymbol{A}_i\;\;=\bar{\boldsymbol{P}}_{i}^{2}\boldsymbol{L}_{i}$,\; $\boldsymbol{B}_{i}=\bar{\boldsymbol{P}}_{i}$,\; $1\leq i\leq M$} \\
		\text{\;\;\;\;\;\;\;\;\;\;\;\;\;\;
			$\boldsymbol{G}_{i}\;\;=\boldsymbol{A}_{i},\;\;\;\;\;\;\boldsymbol{T}_{i}=\boldsymbol{B}_i$, \;$1\leq i\leq M$}
		\\ \text{$\;\;\;\;\;\;\;\;\;\;\;\;\;\;\;\Sk{0}=[]$}\\
		\text{\textbf{For} $K=0:\tau-1$:}
		\begin{enumerate}
			\item \text{\textbf{Solve} the base case (\ref{eq:VectorProblem}) to obtain $\boldsymbol{s}$}
			\item \text{\textbf{Update} $\Sk{K+1}=\left[\Sk{K}^T\;\boldsymbol{s}^{T}\right]^{T}$}
			\item \text{\textbf{Update} }
			\begin{equation*}
			\begin{aligned}
			&    \boldsymbol{T}_{i}=\boldsymbol{B}_{i}^{\frac{1}{2}}\left(\boldsymbol{I}_{MK}-\boldsymbol{P}_{\boldsymbol{B}_{i}^{\frac{1}{2}}\Sk{K}^*}\right)\boldsymbol{B}_{i}^{\frac{1}{2}}\\
			&\gamma_{i}=\text{tr}\left(\boldsymbol{P}_{\boldsymbol{B}_{i}^{\frac{1}{2}}\Sk{K}^*}\boldsymbol{B}_{i}^{-\frac{1}{2}}\boldsymbol{A}_{i}^{\frac{1}{2}}\boldsymbol{B}_{i}^{-\frac{1}{2}}\right)\\ 
			&\boldsymbol{X}_{i}=\boldsymbol{B}_{i}^{\frac{1}{2}}\boldsymbol{P}_{\boldsymbol{B}_{i}^{\frac{1}{2}}\Sk{K}^*} \boldsymbol{B}_{i}^{-\frac{1}{2}}\boldsymbol{A}_{i}^{\frac{1}{2}}-\boldsymbol{A}_{i}^{\frac{1}{2}}\\
			&\boldsymbol{G}_{i}=\gamma_{i}\boldsymbol{T}_{i}+\boldsymbol{X}_{i}\boldsymbol{X}_{i}^{*} \\
			\end{aligned}
			\end{equation*}
		\end{enumerate}\label{algo:Greedy}
	\end{algorithm}

	\section{Numerical Experiments }
	\label{sec:Esperiments}
	We now present some numerical results that both demonstrate the benefit of our joint pilot design framework and the effect of the RF reduction on the system's performance. 
	First,  in Section~\ref{subsec:fullSep} we consider the fully-separable correlations case with different combiner methods and show that the eigen-pilots of Section~\ref{sec:FullySep} enjoys lower MSE than other pilot allocation and design techniques. In Section \ref{subsec:partSep} we consider partially-separable correlations, compare the GSRTM of Algorithm~\ref{algo:Greedy} for pilot sequence design with other methods and show that in this case as well the joint design framework yields lower MSE.
	
	Since the combiner design problem is identical for both correlation scenarios, the effect of RF reduction will be tested only under the fully-separable case. In the partially-separable experiments a full combiner, i.e. $\Nrf=\Nbs$ is considered. 
	
	We compare our proposed algorithms with three other approaches: 
	the common reused orthogonal pilots, where at all the cells the same $K$ orthogonal pilots are transmitted without any optimization over the allocation order, 
	the random pilot design, where at each cell $i$ the pilot matrix $\boldsymbol{S}_{i}$ is drawn randomly
	from an i.i.d. zero-mean unit variance complex-normal distribution, and the smart pilot assignment (SPA) method from \cite{zhu_smart_2015}, which is a near-optimal allocation algorithm when $\tau=K$. 
	As mentioned before, pilot design can also be performed from a single cell perspective as in \cite{pang_optimal_2007,liu_training_2007,kotecha_transmit_2004,noh_pilot_2014,bogale_pilot_2014,bjornson_framework_2010}. However we do not compare our method with this approach, as when generalizing this framework to the multi-cell setting by iterative design, the solution rarely converges. 
	
	Notice that in the reused orthogonal pilots, only $K$ orthogonal sequences are used across all cells, regardless of the number of pilot symbols  $\Tpilots$.
	In practice, when $\Tpilots>K$, there are more than $K$ possible orthogonal sequences, and the reuse ratio can be decreased. 
	This scenario is similar to the random pilot case, where with high probability $\Tpilots$ orthogonal pilots will be drawn. 
	In all techniques, the pilots are normalized to comply with the power constraint $\mathcal{P}=1$. 
	
	Throughout this section, the performance measure used is the sum of normalized \acp{mse}, defined as $\bar{\epsilon}=\frac{1}{M}\sum_{i=1}^{M}\frac{\Vert \boldsymbol{h}_{i}-\hat{\boldsymbol{h}}_{i}\Vert^{2}}{\Vert \boldsymbol{h}_{i}\Vert^{2}}$.
	
	\subsection{Fully Separable Model Experiments}
	\label{subsec:fullSep}
	We begin by treating the fully-separable channel correlation model \eqref{eq:FullySeparable} and test the performance of the eigen-pilot method of Section~\ref{sec:FullySep}. 
	
	We use $M=3$ cells, $\Nbs=10$ \ac{bs} antennas, $K=4$ \acp{ut} in each cells and a single digital input at the \acp{bs}, i.e., $\Nrf=1$. 
	We test two analog combiners: A fully-digital combiner \eqref{eq:WfullyDigital} with $\boldsymbol{T}=\boldsymbol{I}_{\Nrf}$, and an analog combiner obtained using the \ac{GRTM} algorithm proposed in  \cite{ioushua_eldar_hybrid_2017}. \ac{GRTM} is a greedy dictionary-based method, that at each step adds one RF chain to the system and chooses the optimal combiner vector to add to the previously selected ones from a feasible dictionary. 
	We assume a fully-connected phase shifter network, namely, the feasible set is all matrices with unimodular entries.
	The choice of combiner defines the weights $w_i$ in \eqref{eq:wights} that will be used for the pilot design.
	
	The first experiment compares between the eigen-pilots, orthogonal pilot reuse, and random pilots methods, with fully-digital and GRTM based analog combiners. 
	We set the receive side correlation $\boldsymbol{Q}_{i}$ to be a random matrix given by $\boldsymbol{Q}_{i}=\boldsymbol{X}_{i}\boldsymbol{X}_{i}^{*}$, where $\{\boldsymbol{X}_{i}\}_{i=1}^M$ is a set of independent random matrices with i.i.d zero-mean unit variance complex-Gaussian entries. 
	The transmit side correlation $\boldsymbol{P}_{j}$ is a diagonal matrix with diagonal elements drawn uniformly from the interval $\left[0,1\right]$.
	A new realization of $\boldsymbol{Q}_{i},\boldsymbol{P}_{j}$ is generated for each of the 10000 Monte Carlo simulations.
	
	Figure~\ref{fig:KroneckerCombiners} depicts the sum of normalized estimation error $\bar{\epsilon}$ for different number of pilot symbols. 
	While additional symbols improve the eigen-pilot performance, as it enables choosing more users, it does not affect the reused pilot performance. 
	This is because when there is full pilot reuse between cells, additional symbols can only improve intra-cell interference, which in this case does not occur. 
	The random pilot design improves with additional symbols, but falls short in comparison to the eigen-pilots, 
	up to the point when full orthogonality is achieved in both techniques, i.e. $\Tpilots=MK = 12$. 
	As expected, for all methods the fully-digital combiner experiences better performance than the analog GRTM one.
	
	\begin{figure}
		\centering{}\includegraphics[width=0.39\textwidth]{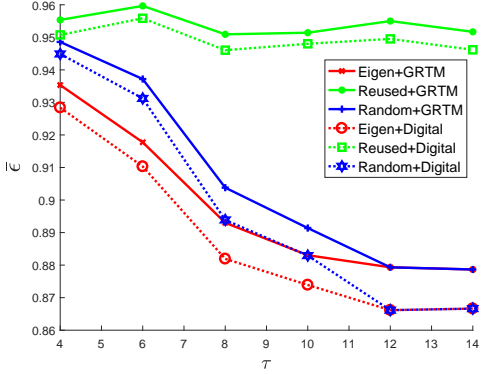} \vspace{-0.4cm}
		\vspace{0.1cm}
		\protect\protect\caption{{\footnotesize{}Estimation error of different pilot design methods vs. number of pilot symbols $\tau$, with GRTM and fully-digital combiners and general receive correlation structure.
			}\label{fig:KroneckerCombiners}}
		\vspace{-0.3cm}
	\end{figure}
	
	Next, we compare the eigen-pilots method with SPA, which is a near-optimal pilot assignment method when $K=\tau$ \cite{zhu_smart_2015}.	
	The SPA framework was developed assuming the MU-MIMO channel fading model \eqref{eq:MarzetaModel}. 
	To comply with this model, we set
	\begin{equation*}
	\boldsymbol{P}_{j}=\text{diag}\left(\beta_{1j},\cdots,\beta_{Mj}\right),\;\boldsymbol{Q}_{i}=\boldsymbol{I}_{\Nbs}
	\end{equation*}
	where $\beta_{i,j}$ are drawn uniformly from the interval $\left[0,1\right]$.
	For this choice, the fully-separable model \eqref{eq:FullySeparable} coincides with the correlation model \eqref{eq:MarzetaModel}, where $\boldsymbol{D}_{ij}=\boldsymbol{P}_{j}$.  
	The pilot sequences used for SPA are the first $\tau$ eigenvectors of $\boldsymbol{X}\boldsymbol{X}^{*}$, with $\boldsymbol{X}$ a random matrix with i.i.d complex-Normal entries with zero-mean and unit variance, that is drawn once every Monte-Carlo simulation.
	
	The resulting $\bar{\epsilon}$  versus the number of pilot symbols $\Tpilots$ is depicted in 
	Figure~\ref{fig:KroneckerSPA}.   
	It is observed in Figure~\ref{fig:KroneckerSPA} that when $K=\Tpilots=4$, SPA indeed outperforms the random pilots but falls short in comparison to the eigen-pilots.
	This result is expected as SPA optimizes only the allocation of the pilots and not the sequences themselves. While the comparison may be unfair, it demonstrates the fact that pilot allocation is a sub-optimal approach in general.
	To the best of our knowledge, no other works considered joint pilot design across all cells, hence more appropriate comparisons are not applicable. 
	
	Figure~\ref{fig:KroneckerSPA_rf} depicts the eigen-pilots, SPA and random-pilots performance versus different numbers of RF chains $\Nrf$, for a fully-digital combiner \eqref{eq:WfullyDigital} and a full-receiver without RF reduction, i.e. $\Nrf=\Nbs$.
	Here we set $\Tpilots=5$.
	Since  $\boldsymbol{R}_{r_i}=\boldsymbol{I}_{\Nbs}$, it follows from \eqref{eq:WfullyDigital} that one possible optimal RF reduction solution is antenna selection, where all antennas are equally important, and every additional RF chain enables choosing one more antenna and improves the estimation error linearly.
	Figure~\ref{fig:KroneckerSPA_rf} demonstrate the effect of the reduction on the different methods. It can be seen that the performance gap between the methods is growing larger as more RF chains are added to the system. For all $\Nrf$ values the eigen-pilots outperform the other methods.

	\begin{figure}[t]
		\centering{}\includegraphics[width=0.39\textwidth]{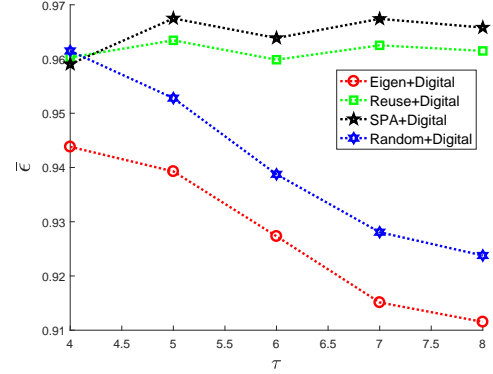} \vspace{-0.5cm}
		\vspace{0.1cm}
		\protect\protect\caption{{\footnotesize{}Estimation error of different pilot design methods vs.  number of pilot symbols $\tau$, with fully-digital combiner.}
			\label{fig:KroneckerSPA}}
		\vspace{-0.1cm}
	\end{figure}
	\vspace{0.2cm}
	\begin{figure}[t]
		\centering{}\includegraphics[width=0.39\textwidth]{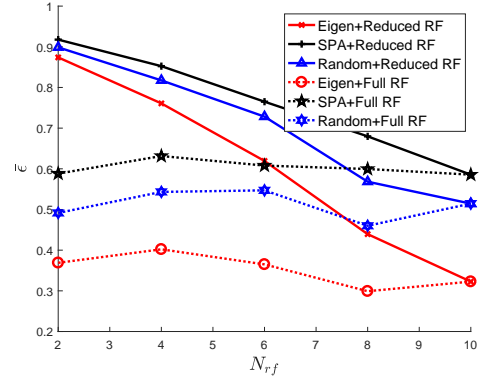} \vspace{-0.5cm}
		\vspace{0.2cm}
		\protect\protect\caption{{\footnotesize{}Estimation error of different pilot design methods vs. number of RF chains $\Nrf$, with fully-digital and full-RF combiners.}
			\label{fig:KroneckerSPA_rf}}
		\vspace{-0.1cm}
	\end{figure}

	\subsection{Partially Separable Model Experiments}
	\label{subsec:partSep}
	For the partially-separable correlations case, we adopt a the MU-MIMO channel fading model \cite{marzetta_noncooperative_2010}, such that $\beta_{ikj}$, $1\leq i,j\leq M$, $1\leq k\leq K$, is defined via 
	\begin{equation}
	\beta_{ikj}=\frac{z_{ikj}}{r_{ikj}^\gamma}
	\end{equation} 
	with $r_{ikj}$ representing the distance between the $k$-th \ac{ut} in the $j$-th cell and the $i$-th \ac{bs}, 
	$\gamma$ is the decay exponent, and $z_{ikj}$ is a log-normal random variable, 
	such that $10\log{z_{ikj}}$ is zero-mean Gaussian with a standard deviation $\sigma_{shad}$.
	Here we used $\gamma=3$ and $\sigma_{shad}=8 [dB]$. 
	The diagonal transmit correlation matrices are given  by $\boldsymbol{P}_{i,j}=\text{diag}\left(\beta_{i1j},\cdots,\beta_{iMj}\right)$.
	The MU-MIMO network consists of $M=7$ hexagonal cells in which the \acp{ut} are distributed uniformly. A realization of such a network is illustrated in Figure~\ref{fig:Cells}.
	\begin{figure}[t]
		\centering{}\includegraphics[width=0.39\textwidth]{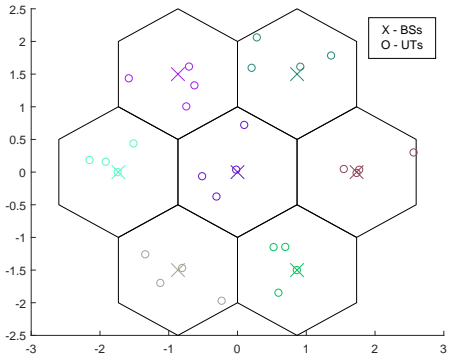} \vspace{-0.5cm}
		\protect\protect\caption{{\footnotesize{}MU-MIMO system with 7 hexagonal cells.}
			\label{fig:Cells}}
		\vspace{-0.5cm}
	\end{figure}
	
	We compare the proposed GSRTM algorithm for pilot sequence design to random pilot design and SPA. Here we drop the orthogonal pilot reuse method, as it is inferior to SPA in the sum-\acp{mse} sense.
	We use  $\Nbs=10$ antennas at each \ac{bs} and $K=4$ \acp{ut} in each cell.
	Since, as noted in Section~\ref{sec:PartiallySep}, the analog combiner design problem in the partially-separable correlation profile is identical to the fully-separable correlations scenario, in the following we focus only on the pilot sequence design algorithms, and consider a system without RF chain reduction, i.e., $\Nrf = \Nbs = 10$.
	
	To numerically evaluate the effect of the number of pilot symbols $\Tpilots$ on the performance of the considered methods, we depict in Figure~\ref{fig:PartiallyMethod} the estimation performance for $\Tpilots \in [4,8]$. Observing  Figure~\ref{fig:PartiallyMethod}, we note that while for $K=\Tpilots$ SPA achieves the best estimation accuracy, as the number of pilot symbols increases, GSRTM results in lower sum-\acp{mse}. This is since SPA cannot exploit additional symbols and is limited to the allocation of $K$ sequences alone.
	
	Next, we study the performance of GSRTM with different dictionaries. 
	Here we used three options: general Gaussian dictionary, where the dictionary matrix has i.i.d complex-Normal entries with zero-mean and unit variance, quadratic amplitude modulation (QAM) with 4 constellation points and QAM with 16 constellation points. All dictionaries are composed of $Q=300$ different possible sequences. That is, the difference between the dictionaries is not the number of possible sequences, but the flexibility of symbol values.
	As expected, the complex-Normal dictionary results in the lowest MSE, as it offers more flexibility and diversity.
	The QAM16 dictionary performance is not much lower than the Gaussian one, as its flexibility is quite high due to a large number of constellation points. Yet, it is much simpler to implement in practice than the infeasible Gaussian dictionary.
	As expected, QAM4 has the highest \ac{mse} as it is the most limited. However, it requires fewer representation bits for each symbol. 
	
	\begin{figure}[t]
		\centering{}\includegraphics[width=0.39\textwidth]{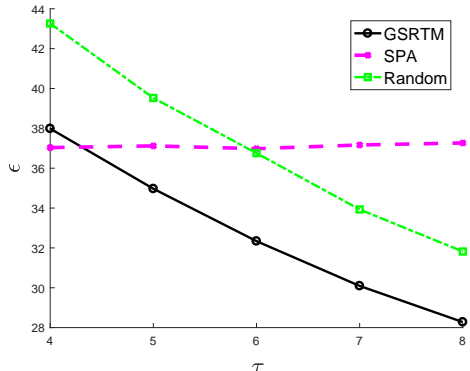} \vspace{-0.5cm}
		\vspace{0.1cm}
		\protect\protect\caption{{\footnotesize{}Estimation error of different pilot design methods vs. number of pilot symbols $\tau$.}
			\label{fig:PartiallyMethod}}
		\vspace{-0.5cm}
	\end{figure}
	\begin{figure}[t]
		\centering{}\includegraphics[width=0.39\textwidth]{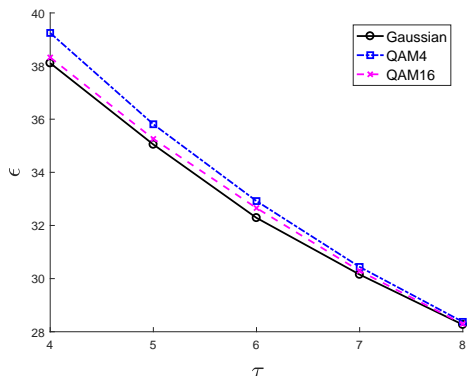} \vspace{-0.5cm}
		\vspace{0.1cm}
		\protect\protect\caption{{\footnotesize{}Estimation error of GSRTM with different dictionaries vs. number of pilot symbols $\tau$.}
			\label{fig:PartiallyDictionary}}
		\vspace{-0.5cm}
	\end{figure}
	
	\section{Conclusions}
	\label{sec:Conclusions}
	We treated the problem of joint pilot sequence and analog combiner design in massive MIMO systems with a reduced number of RF chains. 
	We considered two channel correlation profiles: fully- and partially-separable correlations.
	We showed that for the considered setup, the RF reduction problem can be solved separately from the pilot sequences design and that previously suggested reduction methods fit this framework as well.
	For the fully-separable case, we derived an optimal pilot design solution. 
	We demonstrated that in the special case when the transmit correlation matrices are diagonal, the solution corresponds to a user selection method, where the users are chosen according to their long-term statistics.
	For the partially-separable correlations scenario,  we suggested a greedy method for pilot design, which,  at each step, allows for one more pilot symbol to be added, and solves a sum-of-quadratic-ratios problem.
	Our numerical examples demonstrated the substantial benefits of our proposed designs compared to existing methods that do not jointly optimize the pilots.
	
	\section*{Acknowledgement}
	The authors would like to thank Dr. Nir Shlezinger for his valuable input.
	
	\bibliographystyle{ieeetr}
	\bibliography{PilotPaper}
\end{document}